\DeclareMathOperator*{\argmax}{arg\,max}
\newcolumntype{C}[1]{>{\centering}m{#1}}
\newcommand{\EE}[1]{\mathbb{E}\left[#1\right]}
\newcommand{\Prob}[1]{\mathbb{P}\left\{#1\right\}}
\newcommand{\floor}[1]{\left\lfloor #1 \right\rfloor}
\newtheorem{thm}{Theorem}
\newtheorem{lem}{Lemma}
\newtheorem{prop}{Proposition}
\newtheorem{cor}{Corollary}
\theoremstyle{definition}
\newtheorem{defn}{Definition}[section]
\newtheorem{rem}{Remark}
\DeclarePairedDelimiter\abs{\lvert}{\rvert}%
\let\oldabs\abs
\def\abs{\@ifstar{\oldabs}{\oldabs*}}
\begin{document}
\title{
Learning Optimal Phase-Shifts of Holographic Metasurface Transceivers}

\author{Debamita Ghosh, IITB-Monash Research Academy, IIT Bombay, India \\
Manjesh K. Hanawal, MLioNS Lab, IEOR, IIT Bombay, India \\
Nikola Zlatanov, Innopolis University, Russia}
\maketitle
%\thispagestyle{empty}
%\pagestyle{empty}

%%%%%%%%%%%%%%%%%%%%%%%%%%%%%%%%%%%%%%%%%%%%%%%%%%%%%%%%%%%%%%%%%%%%%%%%%%%%%%%%
\begin{abstract}

Holographic metasurface transceivers (HMT) is an emerging technology for enhancing the coverage and rate of wireless communication systems. However, acquiring accurate channel state information in HMT-assisted wireless communication systems is critical for achieving these goals. In this paper, we propose an algorithm for learning the optimal phase-shifts at a HMT for the far-field channel model. Our proposed algorithm exploits the structure of the channel gains in the far-field regions and learns the optimal phase-shifts in presence of noise in the received signals. We prove that the probability that the optimal phase-shifts estimated by our proposed algorithm deviate from the true values decays exponentially in the number of pilot signals. Extensive numerical simulations validate the theoretical guarantees and also demonstrate significant gains as compared to the state-of-the-art policies. 
\end{abstract}

%%%%%%%%%%%%%%%%%%%%%%%%%%%%%%%%%%%%%%%%%%%%%%%%%%%%%%%%%%%%%%%%%%%%%%%%%%%%%%%%
\begin{IEEEkeywords}
 Holographic Metasurface Transceivers, Channel State Information, Uniform Exploration
		 \vspace{-3mm}
\end{IEEEkeywords}

	%IEEEpeerreviewmaketitle
%\bstctlcite{IEEEexample:BSTcontrol}	
	
\section{Introduction}
Future wireless network technologies, namely beyond-5G and 6G, have been focused on millimeter wave (mmWave) and TeraHertz (THz) communications technologies as possible solutions to the ever growing demands for higher data rates and lower latency. However, mmWave and THz communications have challenges that need to be addressed before this technology is adopted \cite{wan2021terahertz}, \cite{jamali2020intelligent}. One such major challenge is signal deterioration due to reflections and absorption. 

A possible solution for the signal deterioration are base stations (BSs) with massive antennas arrays that can provide large beamforming gains and thereby compensate for the signal deterioration \cite{larsson2014massive}.  However, implementing a BS with a massive antenna array is itself challenging due to the high hardware costs. Holographic Metasurface Transceivers (HMTs) are introduced as a promising solution for building a massive antenna array \cite{wu2019intelligent, huang2020holographic}. A HMT is comprised of a large number of metamaterial elements densely deployed into a limited surface area in order to form a spatially continuous transceiver aperture. These metamaterial elements at the HMT acts as phase-shifting antennas, where each phase-shifting element of the HMT can change the phase of transmiting/receiving signal %the RF signal generated by an RF generator 
and thereby  beamform %can generate narrow beams with high beamforming gains 
towards desired directions where the users are allocated \cite{hu2018beyond}. Due to these continuous apertures, HMTs can be represented as an extension of the traditional massive antenna arrays with discrete antennas to continuous reflecting surfaces \cite{hu2018beyond}. 

In this paper, we consider the HMT-assisted wireless systems illustrated in Fig. \ref{fig:schematic_figure}, where a HMT acts as a BS that serves multiple users. The performance of this system is dependent on channel state information (CSI) estimates at the HMT, which are used for accurate beamforming towards the users. The authors in \cite{yoo2021holographic} and \cite{zhang2022beam} have studied the effect of HMT-assisted systems on enhancing the communication performance under the assumption of perfect CSI. However, perfect CSI is not available in practice. In practice, the CSI has to be estimated via pilot signals, which results in inaccurate CSI estimates at the HMT.

%In this paper, we consider the system model as shown in Fig. \ref{fig:schematic_figure}, when the users are located in the far-field. 
The aim of this paper is to obtain accurate CSI estimates at the HMT, which in turn is used to set the optimal phase-shifts at the HMT that maximize the data rate to the users when the users are located in the far-field. %The optimal phase-shifts are dependent on the users' locations, which are unknown at the HMT and therefore need to be learned. 
To this end, we exploit the structure of the far-field channel model between the HMT and the users to show that the optimal phase-shifts at the HMT can be obtained from five samples of the received pilot signals at the HMT in a noiseless environment. We then use this approach to develop a learning algorithm that learns the optimal phase-shifts from the received pilot signals at the HMT in a noisy environment. %The proposed algorithm works in two-phases. In the first phase, the algorithm estimates the CSI based on a set of phase-shifting parameters. In the second phase, the algorithm estimates the best phase-shifting parameters based on the estimated CSI.  
Finally, we provide theoretical guarantees for our learning algorithm. Specifically, we prove that the probability of the phase-shifts generated by our algorithm to deviate by more than $\epsilon$ from the optimal phase-shifts is small and decays as the number of pilot symbols increases. The error analysis is based on tail probabilities of the non-central Chi-squared distribution. 

In summary, our main contributions are as follows:
\begin{itemize}
\item We propose an efficient learning algorithm for estimating the optimal phase-shifts at an HMT in the presence of noise for the case when the users that the HMT is serving are located at the far-field region.
\item We prove that the probability of the phase-shifts generated by our algorithm to deviate by more than $\epsilon$ from the optimal phase-shifts is small and decays exponentially as the number of pilots used for estimation increases. 
\item We show numerically that the performance of the proposed algorithm significantly outperforms existing CSI estimation algorithms.
\end{itemize}

\subsection{Related Works}
Several channel estimation schemes, which are proposed for the massive antenna arrays, are also applicable to the considered HMT including exhaustive search \cite{dai2006efficient}, hierarchical search \cite{xiao2016hierarchical}, \cite{chen2018beam}, and compressed sensing (CS) \cite{chen2018beam}. As the exhaustive search in \cite{dai2006efficient} significantly increases the training overhead, the authors in \cite{xiao2016hierarchical} and \cite{chen2018beam} proposed the hierarchical search based on a predefined codebook as an improvement over the exhaustive search. The hierarchical schemes, in general, may incur high training overhead and system latency since they require non-trivial coordination among the transmitter and the user \cite{chen2018beam}. On the other hand, the proposed CS-based channel estimation scheme in \cite{chen2018beam} provides trade-offs between accuracy of estimation and training overhead at different computational costs. 

On the other hand, CSI estimation  schemes developed specifically for HMTs can be found in \cite{WCL2022channelestimationHMMIMO} and \cite{ghermezcheshmeh2021channel}. %In \cite{johnson2015sidelobe}, the authors proposed an efficent algorithm for optimizating of the radiated far-field pattern of a holographic metamaterial antenna by suppressing the unnecessary beams. 
The authors in \cite{WCL2022channelestimationHMMIMO} proposed the least-square estimation based approach to study the channel estimation problem for the uplink between a single user and the BS equipped with the holographic surface with a large number of antennas. However, the authors require an additional knowledge of antennas array geometry to reduce the pilot overhead required by the channel estimation, and hence the computational complexity scales up with the number of antennas at the BS.
%Moreover, the study in \cite{wei2022multi} has investigated the spectral efficiency of a multiuser holographic MIMO communication system with maximum rate transmission and zero-forcing precoding. It is revealed that the spectral efficiency increases when more elements separated by a fixed spacing are equipped at the holographic transmitter and receiver. The authors in \cite{zhang2022beam} proposed an HMA-based transceiver model in the near-field region for a downlink system and verified HMAs’ beam focusing capability to communicate with users in close proximity. 
In \cite{ghermezcheshmeh2021channel}, the authors proposed a scheme for the estimation of the far-field channel between a HMT and a user that requires only five pilots for perfect estimation %the channel parameters for
in the noise-free environment. In the noisy case, the authors of \cite{ghermezcheshmeh2021channel} proposed an iterative algorithm that efficiently estimates the far-field channel. % to estimate the unknown channel parameters. 
Unlike the existing works, the training overhead and the computational cost of the proposed scheme in \cite{ghermezcheshmeh2021channel} does not scale with the number of phase-shifting elements at the HMT. The iterative algorithm in \cite{ghermezcheshmeh2021channel} significantly outperforms the hierarchical and CS based schemes. However, the authors in \cite{ghermezcheshmeh2021channel} did not provide any theoretical guarantees on their proposed algorithm. Motivated by \cite{ghermezcheshmeh2021channel}, in this work, we propose an algorithm which outperforms the one in \cite{ghermezcheshmeh2021channel}, and, in addition, we also provide theoretical guarantees for our proposed algorithm. %Hence, we will consider \cite{ghermezcheshmeh2021channel} for our CSI estimation benchmark.
 
This paper is organized as follows. The system and channel models for the HMT communication system are given in Sec. \ref{sec:Model}. The proposed algorithm for learning the optimal phase-shifts is given in Sec. \ref{sec:EstimationStrategy} and its theoretical guarantee is provided in Sec. \ref{sec:Analysis}. Numerical evaluation of the proposed algorithm is provided in Sec. \ref{sec:Numericals}. Finally, Sec. \ref{sec:Conclusions} concludes the paper.

\section{System and Channel Models}
\label{sec:Model}
We consider a HMT-assisted wireless communication system, shown in Fig. \ref{fig:schematic_figure}, where an HMT communicates with multiple users in the mmWave band. We assume that there is a Line of Sight (LoS) between the HMT and each user. As a result, when modeling the far-field channel, we only take into account the LoS path since its power is order of magnitude higher than non-line-of-sight (NLoS) paths \cite{akdeniz2014millimeter}. The NLoS components are incorporated in the noise. We assume that the users send orthogonal pilots to the HMT for channel estimation. Based on the estimated CSI at the HMT to each user, the HMT sends data to the users. Hence, the data rate from the HMT to the users is directly dependent on the accuracy of the CSI estimates at the HMT. Since in this paper our main goal is the accurate CSI estimation at the HMT to each user, which in turn send orthogonal pilots to the HMT, in the rest of the paper, we will focus on the CSI estimation between the HMT and a typical user.
\begin{figure}[ht]
\vspace{-3mm}
    \centering
    \includegraphics[scale = 0.63, trim = 5cm 15.5cm 4cm 2cm, clip]{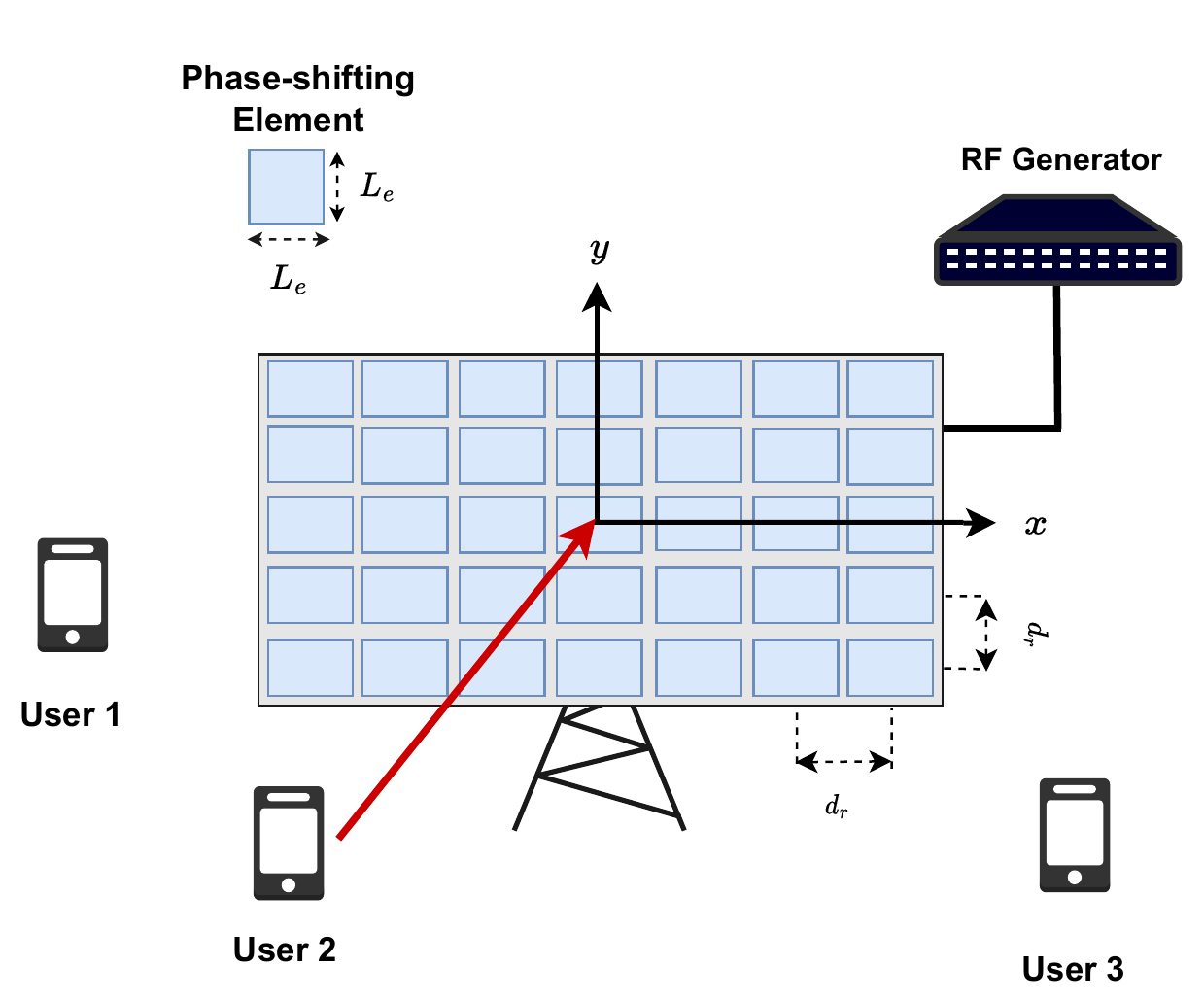} 
    \caption{\small{The HMT-assisted wireless communication system \cite{ghermezcheshmeh2021channel}.}}
    \label{fig:schematic_figure}
   \vspace{-3mm}
\end{figure}

\subsection{HMT Model}

The HMT has a rectangular surface of size $L_x \times L_y$, where $L_x$ and $L_y$ are the width and the length of the surface, respectively. The HMT's surface is comprised of a large number of sub-wavelength phase-shifting elements, where each elements is assumed to be a square of size $L_e \times L_e$ and can change the phase of the transmit/receive signal independently from rest of the elements. Let $d_r$ be the distance between two neighboring phase-shifting elements. The total number of phase-shifting elements of the HMT is given by $M = M_x \times M_y$, where $M_x = L_x/d_r$ and $M_y = L_y/d_r$. Without loss of generality, we assume that the HMT lies in the $x - y$ plane of a Cartesian coordinate system, where the center of the surface is at the origin of the coordinate system. Assuming $M_x$ and $M_y$ are odd numbers, the position of the $(m_x, m_y)^{th}$ phase-shifting element in the Cartesian coordinate system is given as $(x, y) = (m_xd_r, m_yd_r),$ where $m_x \in \left \{ -\frac{M_x-1}{2},\dots,\frac{M_x-1}{2} \right \}$ and $m_y \in \left \{ -\frac{M_y-1}{2},\dots,\frac{M_y-1}{2} \right \}$. When $M_x$ or $M_y$ is even, the position of the $(m_x,m_y)^{th}$ element can be appropriately defined. 
\vspace{-3mm}

\subsection{Channel Model}
 
Consider the channel between the $(m_x,m_y)^{th}$ phase-shifting element at the HMT and the typical user. Let the beamforming weight imposed by the $(m_x,m_y)^{th}$ phase-shifting element at the HMT be $\Gamma_{m_xm_y} = e^{j\beta_{m_xm_y}}$, where $\beta_{m_xm_y}$ is the phase shift at the $(m_x, m_y)^{th}$ element. Let $\lambda$ denote the wavelength of the carrier frequency, $k_0 = \frac{2\pi}{\lambda}$ be the wave number, $d_0$ be the distance between the user and the center of the HMT and let $F_{m_xm_y}$ denote the effect of the size and power radiation pattern of the $(m_x,m_y)^{th}$ phase-shifting element on the channel coefficient \cite{ellingson2021path}. Due to the far-field assumptions, the radiation pattern of all the phase-shifting elements of the HMT are identical, i.e.,  $F_{m_x m_y} = F,\; \forall m_x, m_y$ holds. Finally, let $\theta$ and $\phi$ denote the elevation and azimuth angles of the impinging wave from the user to the center of the HMT, see Fig. \ref{fig:distance_figure}.
\begin{figure}[ht]
\vspace{-5mm}
    \centering
    \includegraphics[scale = 0.75, trim = {6cm 18.5cm 3cm 4cm}, clip]{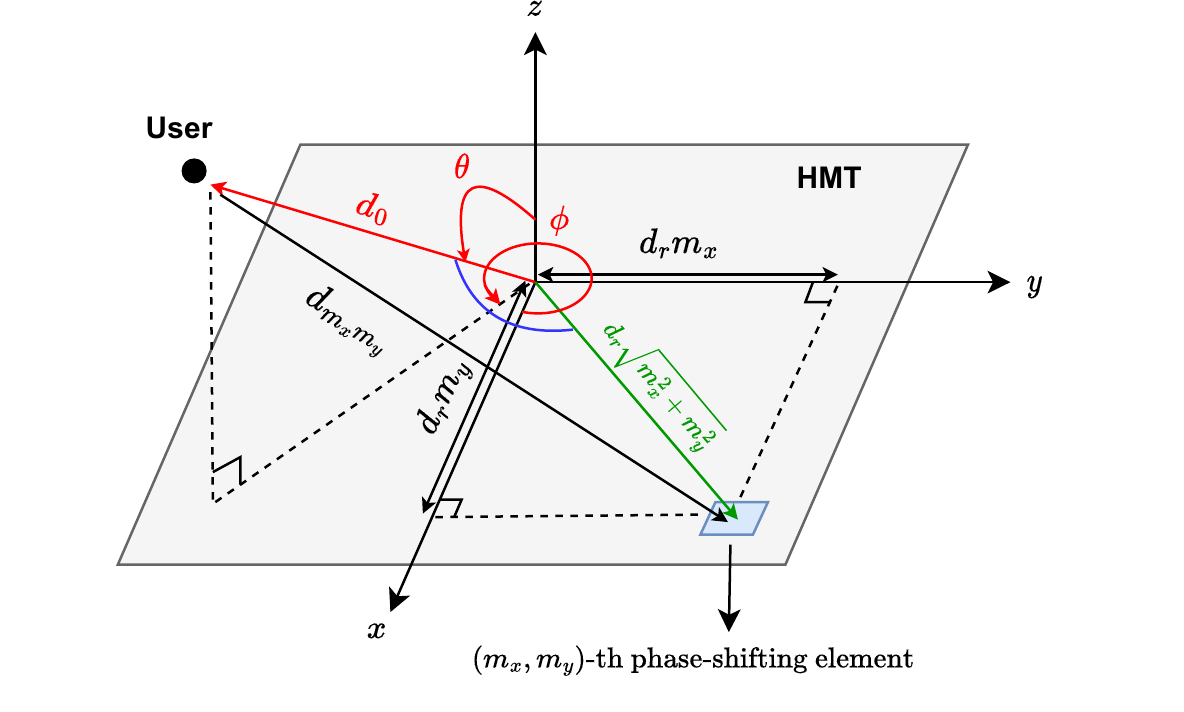}
    \caption{\small{Distance between the $(m_x,m_y)$-th phase-shifting element at the HMT and the user \cite{ghermezcheshmeh2021channel}.}}
    \label{fig:distance_figure}
    \vspace{-3mm}
\end{figure}

Now, if the phase-shift imposed by the $(m_x,m_y)^{th}$ element, $\beta_{m_x,m_y}$, is set to
\begin{equation}
%\label{eqn:PhaseShiftParamterize}
\beta_{m_xm_y} = -\mod(k_0d_r(m_x\beta_1 + m_y\beta_2), 2\pi), \forall m_x, m_y,\nonumber
\end{equation}
where $\beta_1$ and $\beta_2$ are the phase-shift parameters \cite{ghermezcheshmeh2021channel}, \cite{selvan2017fraunhofer}, \cite{najafi2020physics}, which are the only degrees of freedom within the phase-shift $\beta_{m_xm_y}$, then %, with $\beta_{m_xm_y}$ given in \eqref{eqn:PhaseShiftParamterize}, 
the HMT-user channel in the far-field is approximated accurately by \cite{ghermezcheshmeh2021channel}, \cite{selvan2017fraunhofer}, \cite{najafi2020physics} 
\begin{align}
    H(\beta_1, \beta_2) &= \left(\frac{\sqrt{F}\lambda e^{-jk_0d_0}}{4\pi d_0}\right)L_xL_y\times \mathrm{sinc}\bigg(K_x\pi(\alpha_1 - \beta_1)\bigg)\nonumber\\
    &\qquad \qquad \times\mathrm{sinc}\bigg(K_y\pi(\alpha_2 - \beta_2)\bigg),\label{eqn:contphaseHMMIMOuserchannel}
\end{align}
where $K_x = \frac{L_x}{\lambda}, K_y = \frac{L_y}{\lambda}, \alpha_1 = \sin(\theta)\cos(\phi), \alpha_2 = \sin(\theta)\sin(\phi),$ and  $\mathrm{sinc}(x) = \frac{\sin(x)}{x}$. Please note that $\alpha_1 \in [-1,1]$ and $\alpha_2 \in [-1,1]$, and their values depend on the location of the user, i.e., on $\theta$ and $\phi$.

From \eqref{eqn:contphaseHMMIMOuserchannel}, it is clear that the absolute value of the HMT-user channel is maximized when the two sinc functions attain their maximum values, which occurs when the phase-shifting parameters, $\beta_1$ and $\beta_2$, are set to $\beta_1 = \alpha_1$ and $\beta_2 = \alpha_2$, where $(\alpha_1,\alpha_2)$ are unknown to the HMT since they depend on the location of the user. Therefore, in the far-field case, the problem of finding the optimal phase-shifts of the elements at the HMT reduces to estimating the two parameters, $\alpha_1$ and $\alpha_2$ at the HMT. %By setting $\beta_1$ and $\beta_2$ in \eqref{eqn:PhaseShiftParamterize} we obtain the optimal $\beta^*_1$ and $\beta^*_2$.

\begin{rem}
Fig. \ref{fig:abs_h} shows an example of $|H(\beta_1,\beta_2)|$ as a function of $(\beta_1, \beta_2)$. As can be seen from Fig. \ref{fig:abs_h}, the graph of $|H(\beta_1,\beta_2)|$ hits zero periodically and has several lobes. The optimal value $(\alpha_1,\alpha_2) = (0.68,-0.45)$ is attained at the central lobe which has the highest peak and is attained for $(\beta^*_1,\beta^*_2) = (\alpha_1,\alpha_2) = (0.68, -0.45).$
\end{rem}
\begin{figure}[ht]
    \centering
    %\hspace{-0.5cm}
    \includegraphics[width = 8.5cm, height = 4.5cm]{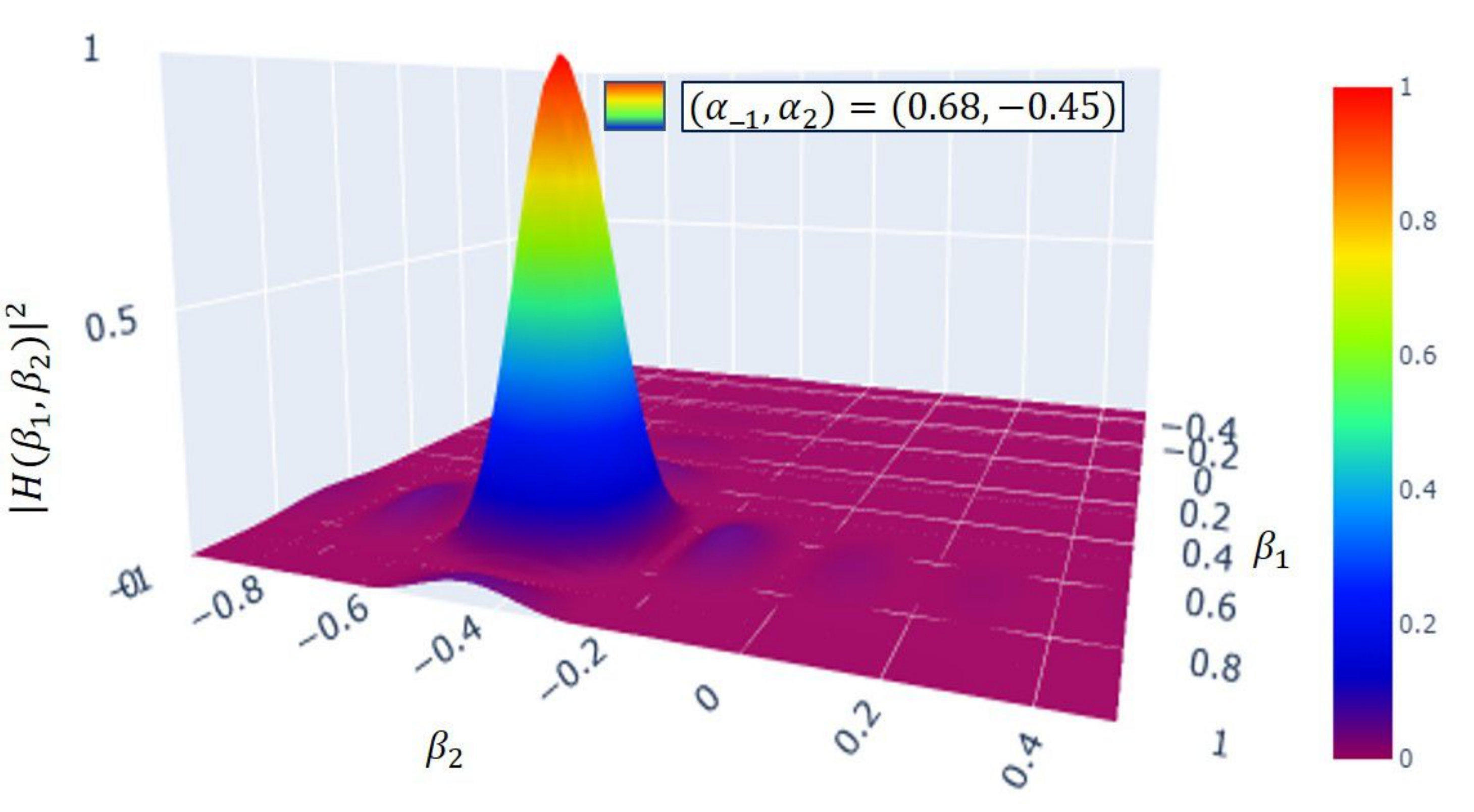}
    \caption{\small{$|H(\beta_1,\beta_2)|$ v/s $(\beta_1,\beta_2)$ for values of $(\alpha_1,\alpha_2) = (0.68,-0.45).$}}
    \label{fig:abs_h}
\vspace{-3mm}
\end{figure}
\vspace{-3mm}

\section{Proposed Channel Estimation Strategy}
\label{sec:EstimationStrategy}
In this section, we propose an algorithm that estimates the optimal phase-shifting parameters $\beta_1$ and $\beta_2$ that maximize $\abs{H(\beta_1,\beta_2)}$ in \eqref{eqn:contphaseHMMIMOuserchannel} in the presence of noise. %the received  signal squared (RSS) at the user of the HMT-assisted wireless communication system in presence of noise.
\subsection{Problem Formulation}
In the channel estimation procedure, the user sends a pilot symbol $x_p = \sqrt{P}$ to the HMT, where $P$ is the pilot transmit power. Then, the received signal at the HMT for fixed phase-shifting parameters $(\beta_1,\beta_2),$ denoted by $y(\beta_1,\beta_2)$, is given by 
\begin{align}
   y(\beta_1,\beta_2) & = \sqrt{P}\times H(\beta_1,\beta_2) + \zeta,  \label{eqn:receivedsignal}
 \end{align}
 where $\zeta$  is the complex-valued additive white Gaussian noise (AWGN) with zero mean and variance $\sigma^2$ at the HMT. The received signal in \eqref{eqn:receivedsignal} is then squared in order to obtain the received signal squared, denoted by $r(\beta_1,\beta_2)$, and given by
 \begin{align}
    r(\beta_1,\beta_2) &= \abs{y(\beta_1,\beta_2)}^2 = \abs{ \sqrt{P}\times H(\beta_1,\beta_2) + \zeta}^2. \label{eqn:absRSS}
\end{align}

\noindent
\textbf{Objective:} Our goal is to identify the optimal phase-shifting parameters, denoted by $(\beta_1^*,\beta_2^*)$, at the HMT that maximizes $r(\beta_1,\beta_2)$ given by \eqref{eqn:absRSS}. Specifically, we aim to solve the following optimisation problem
\begin{align}
\label{eqn:objfn}
    (\beta_1^*,\beta_2^*) &= \argmax_{\substack{\beta_1 \in [-1,1]\\ \beta_2 \in [-1,1]}}r(\beta_1,\beta_2).
\end{align} 
The expected value of $r(\beta_1,\beta_2)$, denoted by $\mu(\beta_1,\beta_2)$, is given by
\begin{align}
    \mu(\beta_1,\beta_2) &= \EE{r(\beta_1,\beta_2)} \nonumber\\
    &= \abs{\sqrt{P}\times H(\beta_1,\beta_2)}^2 + \sigma^2.\label{eqn:expectedRSS}
\end{align}
Using \eqref{eqn:expectedRSS}, the optimization problem in \eqref{eqn:objfn} can be written equivalently as
\begin{align}
\label{eqn:expobjfn}
    (\beta_1^*,\beta_2^*) &= \argmax_{\substack{\beta_1 \in [-1,1]\\ \beta_2 \in [-1,1]}}\mu(\beta_1,\beta_2).
\end{align}
In order to obtain an intuition on how to solve \eqref{eqn:expobjfn}, we first assume that $\mu(\beta_1,\beta_2)$ in \eqref{eqn:expectedRSS} is known perfectly at the HMT for five specific values of the pair $(\beta_1,\beta_2)$. Later, we use the same intuition to solve \eqref{eqn:expobjfn} when $\mu(\beta_1,\beta_2)$ are not known perfectly but can be estimated. %   look into at the noiseless case. Later, we use the approach learned from the noiseless case to solve \eqref{eqn:expobjfn}.

\subsection{The Optimal Phase-Shifting Parameters When $\mu(\beta_1,\beta_2)$ Are Known In Advance} %  In The Noiseless Case}
For notational convenience, let us define the set $\mathcal{B}$ as 
\begin{align}
\label{eqn:setpar}
\mathcal{B} = \bigg\{&(\beta^0_1, \beta^0_2), (\beta^0_1 + v, \beta^0_2), (\beta^0_1 - v, \beta^0_2), \nonumber\\
& (\beta^0_1, \beta^0_2 + w), (\beta^0_1, \beta^0_2 - w)\bigg\}.
\end{align}
The set $\mathcal{B}$ is comprised of five pairs of the phase-shifting parameters $(\beta_1,\beta_2),$ where $\beta^0_1$ and $\beta^0_2$ are some initial arbitrarily selected phase-shifting parameters, $v$ and $w$ are numbers chosen such that $K_xv \in \mathbb{N}$ and $K_yw \in \mathbb{N}$ hold, where $\mathbb{N}$ is the set of natural numbers. Please note that for a selected $(\beta^0_1,\beta^0_2)$ and a chosen $v$ and $w$, if $\abs{\beta^0_1 \pm v} \geq 1$ then we set $\abs{\beta^0_1 \pm v} = 1.$ In the same way, if $\abs{\beta^0_2 \pm w} \geq 1$ then we set $\abs{\beta^0_2 \pm w} = 1.$

\begin{thm}
\label{thm:actualmeansestimate}
If the HMT can obtain $\mu(\beta^0_1, \beta^0_2)$, $\mu(\beta^0_1 + v, \beta^0_2)$, $\mu(\beta^0_1 - v, \beta^0_2)$, $\mu(\beta^0_1, \beta^0_2 + w)$ and $\mu(\beta^0_1, \beta^0_2 - w)$, i.e., obtain $\mu(\beta_1,\beta_2)$ for the five phase-shifting parameters in $(\beta_1,\beta_2) \in \mathcal{B}$ given in \eqref{eqn:setpar}, then the optimal phase-shifting parameters $\beta_1^*$ and $\beta_2^*,$ which are the solutions of \eqref{eqn:expobjfn}, are given by
 \begin{align}
        &\beta_1^* = \left\{\frac{\alpha_1^{(i)}  + \alpha_1^{(j)}}{2} : \min\limits_{i \in \{1,2\}, j \in \{3,4\}}\abs{\alpha_1^{(i)} - \alpha_1^{(j)}}\right\}\label{eqn:beta1opt},
        \intertext{where }
            & \alpha_1^{(1)/(2)} = \beta^0_1 + \frac{v}{ 1 \pm \sqrt{\abs{\frac{\mu\left(\beta^0_1, \beta^0_2\right) - \sigma^2}{\mu\left(\beta^0_1+v, \beta^0_2\right) - \sigma^2}}}}\nonumber\\ %\label{eqn:beta112},\\
            & \alpha_1^{(3)/(4)} = \beta^0_1 - \frac{v}{ 1 \pm \sqrt{\abs{\frac{\mu\left(\beta^0_1, \beta^0_2\right) - \sigma^2}{\mu\left(\beta^0_1-v, \beta^0_2\right) - \sigma^2}}}}\nonumber\\%\label{eqn:beta134},\\
            \intertext{and}
         &\beta_2^* = \left\{\frac{\alpha_2^{(i)}  + \alpha_2^{(j)}}{2} : \min\limits_{i \in \{1,2\}, j \in \{3,4\}}\abs{\alpha_2^{(i)} - \alpha_2^{(j)}}\right\}\label{eqn:beta2opt},\\
         \intertext{where }
            & \alpha_2^{(1)/(2)} = \beta^0_2 + \frac{v}{ 1 \pm \sqrt{\abs{\frac{\mu\left(\beta^0_1, \beta^0_2\right) - \sigma^2}{\mu\left(\beta^0_1, \beta^0_2+w\right) - \sigma^2}}}}\nonumber\\%\label{eqn:beta212},\\
            & \alpha_2^{(3)/(4)} = \beta^0_2 - \frac{v}{ 1 \pm \sqrt{\abs{\frac{\mu\left(\beta^0_1, \beta^0_2\right) - \sigma^2}{\mu\left(\beta^0_1, \beta^0_2-w\right) - \sigma^2}}}}\nonumber%\label{eqn:beta234}.
        \end{align}
\end{thm}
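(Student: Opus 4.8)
The plan is to show that the formulas \eqref{eqn:beta1opt}--\eqref{eqn:beta2opt} are nothing but an explicit inversion of the closed form of $\mu$. First I would substitute \eqref{eqn:contphaseHMMIMOuserchannel} into \eqref{eqn:expectedRSS} to write
\[
\mu(\beta_1,\beta_2)-\sigma^2 = A\,\mathrm{sinc}^2\!\big(K_x\pi(\alpha_1-\beta_1)\big)\,\mathrm{sinc}^2\!\big(K_y\pi(\alpha_2-\beta_2)\big),
\]
where $A = P\big(\sqrt{F}\,\lambda/(4\pi d_0)\big)^2(L_xL_y)^2 > 0$ is independent of $(\beta_1,\beta_2)$. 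Since $\mathrm{sinc}^2(\cdot)\le 1$ with equality only at the origin and $(\alpha_1,\alpha_2)\in[-1,1]^2$ is feasible, the maximiser in \eqref{eqn:expobjfn} is exactly $(\beta_1^*,\beta_2^*)=(\alpha_1,\alpha_2)$, so it suffices to show that the right-hand side of \eqref{eqn:beta1opt} equals $\alpha_1$ and that of \eqref{eqn:beta2opt} equals $\alpha_2$. I would carry this out for $\beta_1^*$; the case of $\beta_2^*$ is identical after replacing $(K_x,v,\alpha_1,\beta_1^0)$ by $(K_y,w,\alpha_2,\beta_2^0)$.

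The key step is a ratio that eliminates $A$, $\alpha_2$ and $\beta_2^0$. Taking the quotient of the values at $(\beta_1^0,\beta_2^0)$ and at $(\beta_1^0+v,\beta_2^0)$, the common factor $A\,\mathrm{sinc}^2(K_y\pi(\alpha_2-\beta_2^0))$ cancels and leaves $\mathrm{sinc}^2(K_x\pi(\alpha_1-\beta_1^0))/\mathrm{sinc}^2(K_x\pi(\alpha_1-\beta_1^0-v))$. Here I would invoke the hypothesis $K_xv=:n\in\mathbb{N}$: since $\sin(K_x\pi(\alpha_1-\beta_1^0-v))=\sin(K_x\pi(\alpha_1-\beta_1^0)-n\pi)=(-1)^n\sin(K_x\pi(\alpha_1-\beta_1^0))$, the two sine factors agree up to a sign and the quotient collapses to
\[
\frac{\mu(\beta_1^0,\beta_2^0)-\sigma^2}{\mu(\beta_1^0+v,\beta_2^0)-\sigma^2} = \left(\frac{\alpha_1-\beta_1^0-v}{\alpha_1-\beta_1^0}\right)^{\!2}.
\]
Writing $R_+$ for the square root of the (nonnegative) left-hand side, this is a quadratic in $\alpha_1-\beta_1^0$ whose roots are $v/(1\mp R_+)$, i.e.\ $\alpha_1\in\{\alpha_1^{(1)},\alpha_1^{(2)}\}$. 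Repeating with $(\beta_1^0-v,\beta_2^0)$ in place of $(\beta_1^0+v,\beta_2^0)$, and letting $R_-$ be the analogous square root, yields $\alpha_1\in\{\alpha_1^{(3)},\alpha_1^{(4)}\}$. Since $\alpha_1$ belongs to both two-element sets, some pair $(i,j)$ with $i\in\{1,2\}$, $j\in\{3,4\}$ satisfies $\alpha_1^{(i)}=\alpha_1^{(j)}=\alpha_1$; that pair attains $\min_{i,j}\,|\alpha_1^{(i)}-\alpha_1^{(j)}|=0$, and its midpoint equals $\alpha_1=\beta_1^*$, which is \eqref{eqn:beta1opt}. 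The same four steps applied to the $w$-shifted pairs in $\mathcal{B}$ give \eqref{eqn:beta2opt}, which completes the argument.

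I expect the real work to be not the algebra above but the well-posedness bookkeeping it hides. The ratios need nonzero denominators and the quadratic step needs $R_\pm\neq 1$, which forces a genericity assumption on $(\beta_1^0,\beta_2^0)$: concretely $K_x(\alpha_1-\beta_1^0)\notin\mathbb{Z}$ and $K_y(\alpha_2-\beta_2^0)\notin\mathbb{Z}$ (so that all five quantities $\mu(\cdot,\cdot)-\sigma^2$ are strictly positive), together with $\alpha_1-\beta_1^0\neq\pm v/2$ and $\alpha_2-\beta_2^0\neq\pm w/2$, and the tacit assumption that the clamping rule in the definition of $\mathcal{B}$ is inactive so that the effective shifts are exactly $v$ and $w$. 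One must also check that the minimising pair in the previous paragraph is unambiguous: a short case split on the sign and magnitude of $t:=v/(\alpha_1-\beta_1^0)$ identifies exactly which two of the four candidates coincide with $\alpha_1$ and shows that the remaining pairs have a strictly positive gap for generic initial parameters, so the midpoint selected in \eqref{eqn:beta1opt} is well defined and equals $\alpha_1$.
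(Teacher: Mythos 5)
Your proposal is correct and follows essentially the same route as the paper: forming the ratio of $\mu(\beta^0_1,\beta^0_2)-\sigma^2$ to its $\pm v$ (resp. $\pm w$) shifted counterparts, cancelling the sine factors via $K_xv,K_yw\in\mathbb{N}$, solving the resulting nonlinear equation for two candidate pairs, and recovering $\alpha_1,\alpha_2$ (hence $\beta^*_1,\beta^*_2$) as the midpoint of the coinciding pair selected by the minimum. Your additional well-posedness bookkeeping (nonvanishing denominators, genericity of $(\beta^0_1,\beta^0_2)$, inactive clamping, uniqueness of the minimising pair) addresses points the paper leaves implicit, but it does not alter the underlying argument.
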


\begin{proof}
By using \eqref{eqn:expectedRSS} and \eqref{eqn:contphaseHMMIMOuserchannel} for any $(\beta_1,\beta_2) = (\beta^0_1, \beta^0_2)$, we have the following
\begin{align}
    \mu(\beta^0_1, \beta^0_2) - \sigma^2 = \Bigg|&\sqrt{P}\left(\frac{\sqrt{F}\lambda e^{-jk_0d_0}}{4\pi d_0}\right)L_xL_y\mathrm{sinc}\bigg(K_x(\alpha_1 - \beta^0_1)\bigg)\nonumber\\
    &\times\mathrm{sinc}\bigg(K_y(\alpha_2 - \beta^0_2)\bigg)\Bigg|^2\label{eqn:noiseeqn1}.
\end{align}
For $(\beta_1,\beta_2) = (\beta^0_1 + v, \beta^0_2)$, where $v$ is any arbitrary parameter such that $K_xv \in \mathbb{N}$ and $\abs{\beta^0_1 \pm v} \leq 1$ holds, we have
\begin{align}
    \mu(\beta^0_1 + v, \beta^0_2) - \sigma^2 = \Bigg|&\sqrt{P}\bigg(\frac{\sqrt{F}\lambda e^{-jk_0d_0}}{4\pi d_0}\bigg)L_xL_y\nonumber\\
    &\times\mathrm{sinc}\bigg(K_y(\alpha_2 - \beta^0_2)\bigg)\Bigg|^2.\label{eqn:noiseeqn2}
\end{align}
Dividing \eqref{eqn:noiseeqn1} by \eqref{eqn:noiseeqn2}, we obtain
\begin{align}
    &\frac{\mu(\beta^0_1, \beta^0_2) - \sigma^2}{ \mu(\beta^0_1 + v, \beta^0_2) - \sigma^2} = \frac{\abs{\mathrm{sinc}\bigg(K_x(\alpha_1 - \beta^0_1)\bigg)}^2}{\abs{\mathrm{sinc}\bigg(K_x(\alpha_1 - \beta^0_1 - v)\bigg)}^2} \nonumber\\
   &\frac{\mu(\beta^0_1, \beta^0_2) - \sigma^2}{ \mu(\beta^0_1 + v, \beta^0_2) - \sigma^2} = \frac{\abs{\frac{\sin\left(K_x \pi(\alpha_1 - \beta^0_1)\right)}{K_x\pi(\alpha_1 - \beta^0_1)}}^2}{\abs{\frac{\sin\left(K_x \pi(\alpha_1 - \beta^0_1-v)\right)}{K_x\pi(\alpha_1 - \beta^0_1-v)}}^2}.\label{eqn:rationoise1}
\end{align}
If $v$ is selected such that $K_xv \in \mathbb{N}$, then we have $\abs{\sin\bigg(K_x\pi(\alpha_1 - \beta^0_1 \pm v)\bigg)} = \abs{\sin\bigg(K_x\pi(\alpha_1 - \beta^0_1)\bigg)}.$ As a result, \eqref{eqn:rationoise1} is simplified to
\begin{align}
    \frac{\mu(\beta^0_1, \beta^0_2) - \sigma^2}{ \mu(\beta^0_1 + v, \beta^0_2) - \sigma^2} = \abs{\frac{\alpha_1 - \beta^0_1 - v}{\alpha_1 - \beta^0_1}}^2.\label{eqn:intstep}
\end{align}
Since $\mu(\beta_1,\beta_2) \geq \sigma^2$,  it follows that $\mu(\beta_1,\beta_2)-\sigma^2 = \abs{\mu(\beta_1,\beta_2)-\sigma^2}$ always holds, for all $(\beta_1,\beta_2) \in \mathcal{B}$. Using this fact, \eqref{eqn:intstep} can be written equivalently as
\begin{align}
    \sqrt{\abs{\frac{\mu(\beta^0_1, \beta^0_2) - \sigma^2}{ \mu(\beta^0_1 + v, \beta^0_2) - \sigma^2}}} = \abs{\frac{\alpha_1 - \beta^0_1 - v}{\alpha_1 - \beta^0_1}}. \label{eqn:nonlinear1}
\end{align}
By solving the nonlinear equation in  \eqref{eqn:nonlinear1} w.r.t. the unknown $\alpha_1$, we obtain two solutions for $\alpha_1$, denoted by $\alpha_1^{(1)}$ and $\alpha_1^{(2)}$, given by
\begin{align}
    &\alpha_1^{(1)/(2)} = \beta^0_1 +  \frac{v}{1 \pm \sqrt{\abs{\frac{\mu(\beta^0_1, \beta^0_2) - \sigma^2}{\mu(\beta^0_1 + v, \beta^0_2) - \sigma^2}}}}.  \label{eqn:alpha1solutionratio1}
\end{align}
It is not known which of the two values $\alpha_1^{(1)}$ and $\alpha_1^{(2)}$ is equal to $\alpha_1.$ To identify the correct solution for $\alpha_1$ of the two solutions given by \eqref{eqn:alpha1solutionratio1}, we need the value of $\mu(\beta_1, \beta_2)$ for $(\beta_1,\beta_2) = (\beta^0_1-v,\beta^0_2).$ Following the same procedure as for \eqref{eqn:noiseeqn1}-\eqref{eqn:alpha1solutionratio1}, but now by using the values of $\mu(\beta_1, \beta_2)$ for $(\beta_1,\beta_2) = (\beta^0_1, \beta^0_2)$ and $(\beta_1,\beta_2) = (\beta^0_1 - v, \beta^0_2)$, we obtain
\begin{align}
    \sqrt{\abs{\frac{\mu(\beta^0_1, \beta^0_2) - \sigma^2}{ \mu(\beta^0_1 - v, \beta^0_2) - \sigma^2}}} = \abs{\frac{\alpha_1 - \beta^0_1 + v}{\alpha_1 - \beta^0_1}}. \label{eqn:nonlinear2}
\end{align}
By solving \eqref{eqn:nonlinear2}, we obtain
\begin{align}
    \alpha_1^{(3)/(4)} = \beta^0_1 -  \frac{v}{1 \pm \sqrt{\abs{\frac{\mu(\beta^0_1, \beta^0_2) - \sigma^2}{\mu(\beta^0_1 - v, \beta^0_2) - \sigma^2}}}}. \label{eqn:alpha1solutionratio2}
\end{align}
One of the solutions in \eqref{eqn:alpha1solutionratio1} is identical to one of the solutions in \eqref{eqn:alpha1solutionratio2}. Therefore, using \eqref{eqn:alpha1solutionratio1} and \eqref{eqn:alpha1solutionratio2}, the correct solution of $\alpha_1$ can be obtained as\footnote{Note that $\alpha_1$ can also be written equivalently as $\alpha_1 = \left\{\alpha_1^{(1)}, \alpha_1^{(2)}\right\} \bigcap \left\{\alpha_1^{(3)}, \alpha_1^{(4)}\right\}$. However, the expression in \eqref{eqn:noiseestalpha1final} is more convenient for the case when the values of $\mu(\beta_1,\beta_2)$ need to be estimated.}
\begin{align}
\alpha_1 = \left\{\frac{\alpha_1^{(i)}  + \alpha_1^{(j)}}{2} : \min_{i \in \{1,2\}, j \in \{3,4\}}\abs{\alpha_1^{(i)} - \alpha_1^{(j)}}\right\}.\label{eqn:noiseestalpha1final} 
\end{align}
In order to obtain $\alpha_2$, we need the value of $\mu(\beta_1,\beta_2)$ for $(\beta_1, \beta_2) = (\beta^0_1, \beta^0_2)$, which we already have, and for $(\beta_1, \beta_2) = (\beta^0_1, \beta^0_2 + w)$, where $w$ is selected such that $K_yw \in \mathbb{N}$, $\abs{\beta^0_2 \pm w} \leq 1$ and $\abs{\sin(K_y\pi(\alpha_2 - \beta^0_2 \pm w))} = \abs{\sin(K_y\pi(\alpha_2-\beta^0_2))}.$  Then, similar to \eqref{eqn:noiseeqn1}-\eqref{eqn:nonlinear1}, we use the values of $\mu(\beta_1,\beta_2)$ for  $(\beta_1,\beta_2) = (\beta^0_1, \beta^0_2)$ and $(\beta_1,\beta_2) = (\beta^0_1, \beta^0_2 + w)$ to obtain

\begin{align}
    &\sqrt{\abs{\frac{\mu(\beta^0_1, \beta^0_2) - \sigma^2}{ \mu(\beta^0_1, \beta^0_2 + w) - \sigma^2}}} = \abs{\frac{\alpha_2 - \beta^0_2 - w}{\alpha_2 - \beta^0_2}}. \label{eqn:nonlinear3}
\end{align}
By solving the nonlinear equation \eqref{eqn:nonlinear3}, we obtain two solutions for $\alpha_2$, denoted by $\alpha_2^{(1)}$ and $\alpha_2^{(2)}$, given by
\begin{align}
\alpha_2^{(1)/(2)} = \beta^0_2 +  \frac{w}{1 \pm \sqrt{\abs{\frac{\mu(\beta^0_1, \beta^0_2) - \sigma^2}{\mu(\beta^0_1, \beta^0_2 + w) - \sigma^2}}}}. \label{eqn:alpha2solutionratio1}
\end{align}
To identify the correct solution for $\alpha_2$ of the two given in \eqref{eqn:alpha2solutionratio1}, we need the value of $\mu(\beta_1, \beta_2)$ for $(\beta_1,\beta_2) = (\beta^0_1, \beta^0_2 - w)$. Again, following the procedure from \eqref{eqn:noiseeqn1}-\eqref{eqn:alpha1solutionratio1}, by using the values of $\mu(\beta_1, \beta_2)$ for $(\beta^0_1, \beta^0_2)$ and $(\beta^0_1, \beta^0_2 - w),$ we obtain
\begin{align}
    \alpha_2^{(3)/(4)} = \beta^0_2 -  \frac{w}{1 \pm \sqrt{\abs{\frac{\mu(\beta^0_1, \beta^0_2) - \sigma^2}{\mu(\beta^0_1, \beta^0_2 - w) - \sigma^2}}}}. \label{eqn:alpha2solutionratio2}
\end{align}
 One of the solutions in \eqref{eqn:alpha2solutionratio1} is exactly same as the solutions of \eqref{eqn:alpha2solutionratio2}. Therefore, using \eqref{eqn:alpha2solutionratio1} and \eqref{eqn:alpha2solutionratio2}, the correct solution of $\alpha_2$ can be obtained as
 \begin{align}
\alpha_2 = \left\{\frac{\alpha_2^{(i)}  + \alpha_2^{(j)}}{2} : \min_{i \in \{1,2\}, j \in \{3,4\}}\abs{\alpha_2^{(i)} - \alpha_2^{(j)}}\right\}.\label{eqn:noiseestalpha2final} 
\end{align}
Finally, by setting $\beta^*_1 = \alpha_1$ and $\beta^*_2 = \alpha_2$, where $\alpha_1$ and $\alpha_2$ are given by \eqref{eqn:noiseestalpha1final} and \eqref{eqn:noiseestalpha2final}, respectively, we obtain \eqref{eqn:beta1opt} and \eqref{eqn:beta2opt}. \qedhere
\end{proof}
\begin{rem}
In \cite[Sec. IV.A]{ghermezcheshmeh2021channel}, the authors proposed the channel estimation strategy under the assumption that there is no noise in the system. However, in the noisy case, we proposed an estimation scheme based on the assumption that $\mu(\beta_1, \beta_2)$ for any of the phase-shifting parameters $(\beta_1,\beta_2) \in \mathcal{B}$ are perfectly known at the HMT.
\end{rem}
However, in practice the exact values of $\mu(\beta_1,\beta_2)$ for any of the phase-shifting parameters $(\beta_1,\beta_2) \in \mathcal{B}$ cannot be known in advance at the HMT, and therefore they need to be estimated using pilot symbols. In the following, we propose an algorithm that estimates $\mu(\beta_1,\beta_2)$ for the phase-shifting parameters in $\mathcal{B}$ and then uses the estimated values of $\mu(\beta_1,\beta_2)$ to find the optimal phase-shifting parameters $(\beta^*_1,\beta^*_2)$ in the presence of noise.

\subsection{Estimation Of The Optimal Phase-Shifting Parameters In The Noisy Case}

The user sends in total $N$ number of pilot signals to the HMT for the estimation of the five values of $\mu(\beta_1,\beta_2)$ for the five pairs of $(\beta_1,\beta_2) \in \mathcal{B}.$ As a result, the proposed algorithm works in five epochs. In the $k^{th}$ epoch, for $k=1,2,\dots$, the user transmits $\floor{\frac{N}{5}}$ number of pilots to the HMT. The HMT sets $(\beta_1,\beta_2)$ to the $k^{th}$ element in $\mathcal{B}$, and collects $\floor{\frac{N}{5}}$ samples of the received signal squared, given by \eqref{eqn:absRSS}. Then $\mu(\beta_1,\beta_2)$, for $(\beta_1,\beta_2)$ being the $k^{th}$ elements in $\mathcal{B}$, is estimated as 
\begin{align}
\label{eqn:empmeans}
    \hat{\mu}(\beta_1,\beta_2) = \frac{1}{\floor{N/5}}\sum_{i=1}^{\floor{N/5}}r_i(\beta_1,\beta_2),
\end{align}
where $r_i(\beta_1,\beta_2)$ is the $i^{th}$ sample of $r(\beta_1,\beta_2)$ in \eqref{eqn:absRSS}.

Next, we replace $\mu(\beta_1,\beta_2)$ in \eqref{eqn:alpha1solutionratio1}, \eqref{eqn:alpha1solutionratio2}, \eqref{eqn:alpha2solutionratio1}, and \eqref{eqn:alpha2solutionratio2}  by $\hat{\mu}(\beta_1,\beta_2),$ $\forall (\beta_1,\beta_2) \in \mathcal{B}$, and thereby obtain our estimates for $\beta^*_1$ and $\beta^*_2$, denoted by $\hat{\beta}^*_1$ and $\hat{\beta}^*_2$. The pseudo-code of the proposed algorithm is given in \ref{algo:twostagealgo} below. 
\begin{algorithm}[h]
	\renewcommand{\thealgorithm}{Two-Stage Phase-Shifts Estimation Algorithm}
	\floatname{algorithm}{}
	\caption{\bf }
	\label{algo:twostagealgo}
    \begin{algorithmic}[1]
        \STATE \textbf{Input:} $N, \mathcal{B}, \sigma^2.$
        \STATE \textbf{***Stage 1: Uniform Exploration ***}
        \FOR{$k= 1$ to $5$}
            \STATE HMT sets $(\beta_1,\beta_2)$ to the $k^{th}$ pair in $\mathcal{B}.$
            \STATE User sends $\floor{N/5}$ number of pilots to the HMT. %for $k^{th}$ phase-shifting parameters $(\beta_1,\beta_2)_k \in \mathcal{B}$.
              \STATE For the $i^{th}$ pilot, the HMT receives $r_i(\beta_1,\beta_2)$, given by \eqref{eqn:absRSS}, for $i = 1,2,\dots, \floor{N/5}.$
              \STATE The HMT computes  $\hat{\mu}_k(\beta_1,\beta_2)$ using \eqref{eqn:empmeans}. 
        \ENDFOR
        \STATE \textbf{***Stage 2: Estimate Optimal Phase-Shifting Parameters***}
        \STATE Obtain $\hat{\beta}^*_1$ as
    \begin{align}
    \label{eqn:estimatedbeta1}
        \hat{\beta}^*_1 = \left\{\frac{\hat{\alpha}_1^{(i)}  + \hat{\alpha}_1^{(j)}}{2} : \min_{i \in \{1,2\}, j \in \{3,4\}}\abs{\hat{\alpha}_1^{(i)} - \hat{\alpha}_1^{(j)}}\right\},
    \end{align}
    where $\hat{\alpha}_1^{(1)/(2)}$ is obtained by replacing the value of $\mu(\beta_1,\beta_2)$ by $\hat{\mu}(\beta_1,\beta_2)$ in \eqref{eqn:alpha1solutionratio1}, and $\hat{\alpha}_1^{(3)/(4)}$ is obtained by replacing the value of $\mu(\beta_1,\beta_2)$ by $\hat{\mu}(\beta_1,\beta_2)$ in \eqref{eqn:alpha1solutionratio2}. 
    \STATE    Obtain $\hat{\beta}^*_2$ as
    \begin{align}
    \label{eqn:estimatedbeta2}
        \hat{\beta}^*_2 = \left\{\frac{\hat{\alpha}_2^{(i)}  + \hat{\alpha}_2^{(j)}}{2} : \min_{i \in \{1,2\}, j \in \{3,4\}}\abs{\hat{\alpha}_2^{(i)} - \hat{\alpha}_2^{(j)}}\right\},
    \end{align}
    where $\hat{\alpha}_2^{(1)/(2)}$ is obtained by replacing the value of $\mu(\beta_1,\beta_2)$ by $\hat{\mu}(\beta_1,\beta_2)$ in \eqref{eqn:alpha2solutionratio1}, and $\hat{\alpha}_2^{(3)/(4)}$ is obtained by replacing the value of $\mu(\beta_1,\beta_2)$ by $\hat{\mu}(\beta_1,\beta_2)$ in \eqref{eqn:alpha2solutionratio2}.
   \STATE \textbf{Output:} $\hat{\beta}^*_1$ and $\hat{\beta}^*_2.$\\
    \STATE  \textbf{Phase-shifts at HMT} Set the phase-shift of the $(m_x,m_y)^{th}$ element at the HMT to
      $$\beta_{m_xm_y} = -\mod(k_0d_r(m_x\hat{\beta}^*_1 + m_y\hat{\beta}^*_2),2\pi).$$
    \end{algorithmic}
\end{algorithm}
We note that the choice of the initial $(\beta^0_1, \beta^0_2)$ in the set $\mathcal{B}$ was arbitrary. The values of $(\beta^0_1, \beta^0_2)$ can effect the estimation error. In general, if the values $(\beta^0_1, \beta^0_2)$ are closer to the $(\alpha_1, \alpha_2),$ the better the estimation will be. A good choice for $(\beta^0_1, \beta^0_2)$ is given in \cite[Sec. V.C]{ghermezcheshmeh2021channel}, which leads to faster learning of $(\alpha_1, \alpha_2)$.

\section{Theoretical Guarantees For The Proposed Algorithm}
\label{sec:Analysis}
In the section, we bound the probability that the estimates, obtained from the proposed \ref{algo:twostagealgo} Algorithm, deviate from the true values of $(\alpha_1,\alpha_2)$ by an amount $0\leq \epsilon \leq 1$. In particular, we upper bound the following error probability  %$(\hat{\beta}^*_1,\hat{\beta}^*_2)$. In the following, we provide an upper bound on the probability that $\hat{\beta}^*_1$ and $\hat{\beta}^*_2$ are close to $\alpha_1$ and $\alpha_2$ respectively. Specifically, we provide an upper bound on the following error probability as given by 
\begin{align}
\label{eqn:ErrProb}
    \Prob{\bigg(\hat{\beta}^*_1-\alpha_1\bigg)^2 + \bigg(\hat{\beta}^*_2-\alpha_2\bigg)^2 \geq \epsilon}.
\end{align}
 We use the following results to upper bound the error probability in \eqref{eqn:ErrProb}.

\begin{lem}
\label{prop:prob_ineq}
Let $\{X_n\}$ be a sequence of random variables (RVs) on a probability space. Let $X$ be a RV defined on the same probability space.
Then, the following holds $$\Prob{\abs{X_n - X_m} \geq \epsilon} \leq \Prob{\abs{X_n - X} \geq \frac{\epsilon}{2}} + \Prob{\abs{X_m - X} \geq \frac{\epsilon}{2}}.$$
\end{lem}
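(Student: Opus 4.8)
The plan is to obtain the bound from the triangle inequality combined with finite subadditivity of $\mathbb{P}$ (the union bound). I would work pointwise on the underlying probability space first: for every outcome $\omega$, the ordinary triangle inequality for the absolute value gives $|X_n(\omega) - X_m(\omega)| \le |X_n(\omega) - X(\omega)| + |X(\omega) - X_m(\omega)|$.

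Next I would pass from this pointwise inequality to an inclusion of events, arguing by contraposition. Suppose that at some $\omega$ both $|X_n(\omega) - X(\omega)| < \epsilon/2$ and $|X_m(\omega) - X(\omega)| < \epsilon/2$ hold; then the displayed triangle inequality forces $|X_n(\omega) - X_m(\omega)| < \epsilon/2 + \epsilon/2 = \epsilon$. Taking the contrapositive, whenever $|X_n(\omega) - X_m(\omega)| \ge \epsilon$, at least one of $|X_n(\omega) - X(\omega)| \ge \epsilon/2$ or $|X_m(\omega) - X(\omega)| \ge \epsilon/2$ must hold. Consequently, $\{|X_n - X_m| \ge \epsilon\} \subseteq \{|X_n - X| \ge \epsilon/2\} \cup \{|X_m - X| \ge \epsilon/2\}$, where these are genuine (measurable) events since $X_n, X_m, X$ are random variables on a common probability space and $|\cdot|$ is continuous.

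Finally I would take probabilities on both sides: monotonicity of $\mathbb{P}$ yields $\Prob{|X_n - X_m| \ge \epsilon} \le \Prob{\{|X_n - X| \ge \epsilon/2\} \cup \{|X_m - X| \ge \epsilon/2\}}$, and the union bound bounds the right-hand side by $\Prob{|X_n - X| \ge \epsilon/2} + \Prob{|X_m - X| \ge \epsilon/2}$, which is precisely the claimed inequality. There is essentially no obstacle here; the only step meriting a line of care is the elementary contrapositive argument that produces the event inclusion, and everything else is a direct application of standard properties of probability measures.
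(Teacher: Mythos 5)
Your proof is correct and follows essentially the same route as the paper's: the pointwise triangle inequality yields the event inclusion $\{|X_n - X_m| \geq \epsilon\} \subseteq \{|X_n - X| \geq \epsilon/2\} \cup \{|X_m - X| \geq \epsilon/2\}$, after which monotonicity and the union bound give the claim. Your explicit contrapositive step is just a slightly more careful phrasing of the same inclusion the paper uses.
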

\begin{proof}
The proof is given in the Appendix A.
\end{proof}
Let $\chi_p^2(\lambda)$ denote a non-central Chi-squared distribution with $p$ degrees of freedom and non-centrality parameter $\lambda$. 

\begin{lem}
\label{lem:chi_dist}
Let $X=\frac{2}{\sigma^2}r(\beta_1, \beta_2)$, where $r(\beta_1,\beta_2)$ is given by \eqref{eqn:absRSS}, and let $\lambda_1= \frac{2}{\sigma^2}\abs{\sqrt{P}H(\beta_1, \beta_2)}^2$. Then, $X$ is distributed as  $\chi_{2}^2(\lambda_1)$, i.e., $X \sim \chi_{2}^2(\lambda_1).$ Furthermore, if $X_i$ for $i=1,2,\dots,n$ are $n$ independently and identically distributed (i.i.d.) RVs of $\chi_{2}^2(\lambda_1)$, then $$\sum\limits_{i=1}^n X_i \sim \chi_{2n}^2(n\lambda_1).$$
\end{lem}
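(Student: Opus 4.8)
The plan is to rewrite $X$ as a sum of squares of two independent unit-variance real Gaussians, which is exactly the defining representation of a non-central Chi-squared law with two degrees of freedom, and then to deduce the second claim from the additivity of this family under independent sums.

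First I would split the complex noise and the (deterministic) channel coefficient into real and imaginary parts. Writing $\zeta = \zeta_R + j\zeta_I$, the convention that $\zeta$ is circularly-symmetric complex AWGN of variance $\sigma^2$ means that $\zeta_R$ and $\zeta_I$ are independent $\mathcal{N}(0,\sigma^2/2)$ random variables (this is consistent with \eqref{eqn:expectedRSS}, since then $\mathbb{E}\abs{\zeta}^2 = \sigma^2$). Since $H(\beta_1,\beta_2)$ is deterministic once $(\beta_1,\beta_2)$ is fixed, write $\sqrt{P}\,H(\beta_1,\beta_2) = a + jb$ with $a,b \in \mathbb{R}$. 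Then \eqref{eqn:absRSS} gives $r(\beta_1,\beta_2) = (a+\zeta_R)^2 + (b+\zeta_I)^2$, so that
\[
X = \frac{2}{\sigma^2}\, r(\beta_1,\beta_2) = U^2 + V^2, \qquad U := \frac{\sqrt{2}\,(a+\zeta_R)}{\sigma},\ \ V := \frac{\sqrt{2}\,(b+\zeta_I)}{\sigma}.
\]

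Next I would verify the distributional claim. Since $\mathrm{Var}(\zeta_R) = \mathrm{Var}(\zeta_I) = \sigma^2/2$, the variables $U$ and $V$ each have unit variance, they are independent (because $\zeta_R,\zeta_I$ are), and $\mathbb{E}U = \sqrt{2}a/\sigma$, $\mathbb{E}V = \sqrt{2}b/\sigma$. Hence $X = U^2+V^2$ is, by definition, $\chi_2^2(\lambda)$ with
\[
\lambda = (\mathbb{E}U)^2 + (\mathbb{E}V)^2 = \frac{2(a^2+b^2)}{\sigma^2} = \frac{2\,\abs{\sqrt{P}\,H(\beta_1,\beta_2)}^2}{\sigma^2} = \lambda_1,
\]
which proves the first part. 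For the second part, I would use that each i.i.d.\ copy $X_i$ admits the representation $X_i = Z_{i,1}^2 + Z_{i,2}^2$ with $Z_{i,1},Z_{i,2}$ independent unit-variance Gaussians satisfying $(\mathbb{E}Z_{i,1})^2 + (\mathbb{E}Z_{i,2})^2 = \lambda_1$. Then $\sum_{i=1}^n X_i = \sum_{i=1}^n \sum_{\ell=1}^2 Z_{i,\ell}^2$ is a sum of $2n$ independent unit-variance Gaussian squares whose squared means total $n\lambda_1$, i.e.\ $\sum_{i=1}^n X_i \sim \chi_{2n}^2(n\lambda_1)$. Equivalently, one multiplies moment generating functions: the MGF of $\chi_2^2(\lambda_1)$ is $(1-2t)^{-1}\exp\!\big(\lambda_1 t/(1-2t)\big)$ for $t<1/2$, so the MGF of the sum is $(1-2t)^{-n}\exp\!\big(n\lambda_1 t/(1-2t)\big)$, which is the MGF of $\chi_{2n}^2(n\lambda_1)$, and uniqueness of the MGF concludes.

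The only delicate point is bookkeeping with the factor of two: one must keep straight that ``variance $\sigma^2$'' for the complex noise $\zeta$ corresponds to variance $\sigma^2/2$ for each of $\zeta_R$ and $\zeta_I$, so that normalizing by $2/\sigma^2$ yields exactly unit-variance Gaussians and the advertised non-centrality parameter $\lambda_1 = 2\abs{\sqrt{P}\,H(\beta_1,\beta_2)}^2/\sigma^2$. Beyond that, both parts are the textbook characterization of the (sum of) non-central Chi-squared distribution(s).
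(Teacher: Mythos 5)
Your proposal is correct and follows essentially the same route as the paper's proof: decompose $\zeta$ into independent real and imaginary Gaussian parts of variance $\sigma^2/2$, normalize by $\sqrt{2}/\sigma$ to obtain unit-variance normals, identify $X$ as a sum of two such squares giving $\chi_2^2(\lambda_1)$, and invoke additivity of the non-central Chi-squared family for the sum. The extra MGF verification you supply for the second part is a harmless elaboration of the same additivity property the paper cites.
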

\begin{proof}
The proof is given in the Appendix B.
\end{proof}
%The notation $\sim$ means "distributed as".

The following theorem provides an upper bound on the error probability in \eqref{eqn:ErrProb}.

\begin{thm}
\label{thm:upperboundErrProb}
Let us perform uniform exploration on the set $\mathcal{B}$ given in \eqref{eqn:setpar}. For any $0 \leq \epsilon \leq 1$, %$\gamma_1 > 0$ and $\gamma_2 > 0$, 
the error probability in \eqref{eqn:ErrProb} is upper bounded as
\begin{align}
    \mathbb{P}\bigg\{\bigg(\hat{\beta}^*_1-\alpha_1\bigg)^2 + \bigg(\hat{\beta}^*_2&-\alpha_2\bigg)^2 \geq \epsilon\bigg\} \leq  4\bigg\{ e^{-\frac{n}{32}\big(\frac{\epsilon\lambda_2}{1+\lambda_2}\big)^2} + e^{-\frac{n}{32}\big(\frac{\epsilon\lambda_3}{1+\lambda_3}\big)^2} \nonumber\\
    &\quad+ e^{-\frac{n}{32}\big(\frac{\epsilon\lambda_4}{1+\lambda_4}\big)^2} + e^{-\frac{n}{32}\big(\frac{\epsilon\lambda_5}{1+\lambda_5}\big)^2}\bigg\},\label{eqn:totalboundalpha1alpha2}
\end{align}
where
\begin{align}
    &\lambda_1 = \frac{2\abs{\sqrt{P}H(\beta^0_1, \beta^0_2)}^2}{\sigma^2},\quad  \lambda_2 = \frac{2\abs{\sqrt{P}H(\beta^0_1 + v, \beta^0_2)}^2}{\sigma^2}, \nonumber\\ 
    &\lambda_3 = \frac{2\abs{\sqrt{P}H(\beta^0_1 - v, \beta^0_2)}^2}{\sigma^2}, \quad \lambda_4 = \frac{2\abs{\sqrt{P}H(\beta^0_1, \beta^0_2 + w)}^2}{\sigma^2},\nonumber\\ 
   & \lambda_5 = \frac{2\abs{\sqrt{P}H(\beta^0_1, \beta^0_2 - w)}^2}{\sigma^2}.\nonumber
\end{align}
\end{thm}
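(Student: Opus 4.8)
The plan is to pass from the squared two‑dimensional error to two one‑dimensional coordinate errors, then from each coordinate estimate $\hat\beta_i^*$ back to the five empirical means $\hat\mu(\beta_1,\beta_2)$, $(\beta_1,\beta_2)\in\mathcal{B}$, that enter its closed form, and finally to invoke Lemma~\ref{lem:chi_dist} together with a sub‑exponential tail bound for the non‑central Chi‑squared law. Write $n=\lfloor N/5\rfloor$. The first step is trivial: if $(\hat\beta_1^*-\alpha_1)^2+(\hat\beta_2^*-\alpha_2)^2\ge\epsilon$ then one of the two summands is at least $\epsilon/2$, so the probability in \eqref{eqn:totalboundalpha1alpha2} is at most $\Prob{(\hat\beta_1^*-\alpha_1)^2\ge\epsilon/2}+\Prob{(\hat\beta_2^*-\alpha_2)^2\ge\epsilon/2}$, and by the symmetry between the two coordinates it suffices to bound the first; the $\lambda_2,\lambda_3$ exponentials will come from it and $\lambda_4,\lambda_5$ from the second.

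Next I would unwind $\hat\beta_1^*$. By Theorem~\ref{thm:actualmeansestimate}, $\alpha_1$ is the value shared by $\{\alpha_1^{(1)},\alpha_1^{(2)}\}$ and $\{\alpha_1^{(3)},\alpha_1^{(4)}\}$, say $\alpha_1=\alpha_1^{(i^\circ)}=\alpha_1^{(j^\circ)}$ with $i^\circ\in\{1,2\}$, $j^\circ\in\{3,4\}$, whereas $\hat\beta_1^*=\frac12\bigl(\hat\alpha_1^{(\hat i)}+\hat\alpha_1^{(\hat j)}\bigr)$ for the index pair minimising $|\hat\alpha_1^{(i)}-\hat\alpha_1^{(j)}|$. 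Using Lemma~\ref{prop:prob_ineq} with the two single‑ratio estimates of $\alpha_1$ playing the roles of $X_n,X_m$ and $\alpha_1$ the role of $X$, the event $\{(\hat\beta_1^*-\alpha_1)^2\ge\epsilon/2\}$ is contained in $\{(\hat\alpha_1^{(i^\circ)}-\alpha_1)^2\ge c\epsilon\}\cup\{(\hat\alpha_1^{(j^\circ)}-\alpha_1)^2\ge c\epsilon\}$ for an absolute constant $c$, provided one first checks that once every $\hat\alpha_1^{(k)}$ is within $O(\sqrt\epsilon)$ of its noiseless value the argmin selects the pair $(i^\circ,j^\circ)$ (or at worst a pair whose midpoint is still $O(\sqrt\epsilon)$‑close to $\alpha_1$); this selection step is the fiddliest part of the reduction and I expect it to need a mild separation hypothesis ruling out the two spurious roots coinciding away from $\alpha_1$.

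I would then bound $\Prob{(\hat\alpha_1^{(i^\circ)}-\alpha_1)^2\ge c\epsilon}$. Here $\hat\alpha_1^{(i^\circ)}$ is the explicit function \eqref{eqn:alpha1solutionratio1} of $\hat\mu(\beta_1^0,\beta_2^0)$ and $\hat\mu(\beta_1^0+v,\beta_2^0)$ and $\alpha_1$ is the same function of the true means, where the cancellation of the sine factors using $K_xv\in\mathbb{N}$ (as in \eqref{eqn:intstep}) makes the map depend on the two means only through the ratio of the nonnegative quantities $\mu(\cdot)-\sigma^2$. A Lipschitz estimate on $r\mapsto\beta_1^0+v/(1\pm\sqrt{r})$ composed with that ratio, together with $\mu(\beta_1,\beta_2)-\sigma^2=\frac{\sigma^2}{2}\lambda_\cdot$, shows that $(\hat\alpha_1^{(i^\circ)}-\alpha_1)^2\ge c\epsilon$ forces either $\hat\mu(\beta_1^0+v,\beta_2^0)$ or $\hat\mu(\beta_1^0,\beta_2^0)$ to deviate from its mean by at least a threshold of order $\sqrt\epsilon\,\sigma^2\lambda_2$; a further use of Lemma~\ref{prop:prob_ineq} separates the two. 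By Lemma~\ref{lem:chi_dist}, $\frac{2n}{\sigma^2}\hat\mu(\beta_1^0+v,\beta_2^0)\sim\chi^2_{2n}(n\lambda_2)$, which has mean $n(2+\lambda_2)$ and variance $4n(1+\lambda_2)$; a Bernstein‑type tail bound for this variable about its mean at deviation of order $n\lambda_2\sqrt\epsilon$ produces a term of the claimed shape $2e^{-(n/32)(\epsilon\lambda_2/(1+\lambda_2))^2}$, and the $(\beta_1^0-v,\beta_2^0)$ point gives the $\lambda_3$ term. Repeating the argument for $\hat\beta_2^*$ yields the $\lambda_4,\lambda_5$ terms, and combining the split of paragraph one with the per‑coordinate splits produces the overall constant $4$; the deviations of $\hat\mu(\beta_1^0,\beta_2^0)$ that also arise must be charged against the thresholds already used, which is what the absence of a $\lambda_1$‑term in \eqref{eqn:totalboundalpha1alpha2} indicates.

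The main obstacle I anticipate is the conjunction of the last two steps: obtaining a Lipschitz bound for the root map that is uniform across the $\pm$ branch, whose denominator $1\mp\sqrt{r}$ can be small near $r=1$, and then applying a non‑central Chi‑squared concentration inequality with constants sharp enough to reproduce the exact $\epsilon\lambda/(1+\lambda)$ dependence; the argmin‑selection argument of the second paragraph (and the attendant non‑degeneracy of the spurious roots) is the other place where genuine care will be needed.
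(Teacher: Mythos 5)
Your skeleton matches the paper's at the two ends — the split of \eqref{eqn:ErrProb} into per-coordinate events via Lemma~\ref{prop:prob_ineq}, and the final use of Lemma~\ref{lem:chi_dist} plus a sub-exponential (Bernstein-type) tail for the non-central Chi-squared empirical means — but the middle of your argument is different from the paper's and contains the genuine gaps. First, your reduction from $\hat{\beta}^*_1$ to the two ``correct'' roots $\hat{\alpha}_1^{(i^\circ)},\hat{\alpha}_1^{(j^\circ)}$ rests on an argmin-selection argument that you leave unproved and that, as you yourself note, would require a separation hypothesis on the spurious roots; no such hypothesis appears in Theorem~\ref{thm:upperboundErrProb}, so this step cannot be patched without changing the statement. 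The paper never identifies which pair the minimum in \eqref{eqn:estimatedbeta1} selects: it upper bounds the event $\{\abs{\hat{\beta}^*_1-\alpha_1}\geq\sqrt{\epsilon/2}\}$ by the union over all four candidate midpoints, applies Lemma~\ref{prop:prob_ineq} to each, and then bounds the deviation of \emph{every} root estimate from $\alpha_1$ (spurious ones included) via the crude triangle inequality \eqref{eqn:mod_bound}, which turns each event into a statement about the magnitude of $v/(1\pm\sqrt{r})$ alone; the contributions of the two branches then telescope into a single one-sided event on the ratio $\abs{(\hat{\mu}_2-2n)/(\hat{\mu}_1-2n)}$ in \eqref{eqn:boundtotalhatalpha1}.

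Second, your plan to push a Lipschitz estimate of the root map $r\mapsto\beta^0_1+v/(1\pm\sqrt{r})$ through the ratio of empirical means fails exactly where you anticipate: the $1-\sqrt{r}$ branch has no uniform Lipschitz constant near $r=1$, and any local constant would depend on $v$, $\beta^0_1$ and the true ratio, so it cannot reproduce the clean $\frac{n}{32}\big(\epsilon\lambda_k/(1+\lambda_k)\big)^2$ exponents and the constant $4$. The paper avoids Lipschitz analysis altogether: after the triangle-inequality step it conditions on $\hat{\mu}_1=u$, uses the hypothesis $\epsilon\leq 1$ to shrink the deviation threshold, and bounds the Chi-squared tail of $\hat{\mu}_2$ (resp.\ $\hat{\mu}_3,\hat{\mu}_4,\hat{\mu}_5$) by a quantity independent of $u$, then integrates out $f_{\hat{\mu}_1}$. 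This is also why no $\lambda_1$ term appears: $\hat{\mu}(\beta^0_1,\beta^0_2)$ is integrated out with a bound uniform in $u$, not ``charged against the thresholds'' as you conjecture. So while your final concentration step is the right ingredient, the selection step and the Lipschitz step as proposed do not yield the stated bound.
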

\begin{proof}
Let us denote the estimate of $\mu(\beta^0_1, \beta^0_2)$ by $\hat{\mu}(\beta^0_1, \beta^0_2)$ which is given by
\begin{align}
    \hat{\mu}(\beta^0_1, \beta^0_2) &= \frac{1}{n}\sum\limits_{i=1}^n r_i(\beta^0_1, \beta^0_2) = \frac{\sigma^2}{2n}\sum\limits_{i=1}^n X_i. \nonumber
\end{align}
Using Lemma~\ref{lem:chi_dist}, we have
\begin{align}
    \hat{\mu}_1 &:=\frac{2n}{\sigma^2}\hat{\mu}(\beta^0_1, \beta^0_2) \sim \chi^2_{2n}(n\lambda_1) \label{eqn:empmeanbeta1beta2}\\
    \hat{\mu}_2 &:=\frac{2n}{\sigma^2}\hat{\mu}(\beta^0_1 + v, \beta^0_2) \sim \chi^2_{2n}(n\lambda_2) \label{eqn:empmeansbeta1nextbeta2}\\
    \hat{\mu}_3 &:=\frac{2n}{\sigma^2}\hat{\mu}(\beta^0_1 - v, \beta^0_2) \sim \chi^2_{2n}(n\lambda_3) \label{eqn:empmeansbeta1prevbeta2}\\
    \hat{\mu}_4 &:= \frac{2n}{\sigma^2}\hat{\mu}(\beta^0_1, \beta^0_2 + w) \sim \chi^2_{2n}(n\lambda_4) \label{eqn:empmeansbeta1beta2next} \\
     \hat{\mu}_5 &:=\frac{2n}{\sigma^2}\hat{\mu}(\beta^0_1, \beta^0_2 - w) \sim \chi^2_{2n}(n\lambda_5) \label{eqn:empmeansbeta1beta2prev}
\end{align}
where $\lambda_1, \lambda_2, \lambda_3, \lambda_4$ and $\lambda_5$ is given in Theorem \ref{thm:upperboundErrProb}.
% \begin{align}
%      \lambda_1 &= \frac{2\abs{\sqrt{P}H(\beta^0_1, \beta^0_2)}^2}{\sigma^2},\label{eqn:lambda1}\\
%      \lambda_2 &= \frac{2\abs{\sqrt{P}H(\beta^0_1 + v, \beta^0_2)}^2}{\sigma^2},\label{eqn:lambda2} \\
%     \lambda_3 & = \frac{2\abs{\sqrt{P}H(\beta^0_1 - v, \beta^0_2)}^2}{\sigma^2}, \label{eqn:lambda3}\\
%      \lambda_4 & = \frac{2\abs{\sqrt{P}H(\beta^0_1, \beta^0_2 + w)}^2}{\sigma^2},\label{eqn:lambda4}\\
%       \lambda_5 & = \frac{2\abs{\sqrt{P}H(\beta^0_1, \beta^0_2 - w)}^2}{\sigma^2}.\label{eqn:lambda5}
% \end{align}

The random variables $\hat{\mu}_1, \hat{\mu}_2, \hat{\mu}_3, \hat{\mu}_4,$ and $\hat{\mu}_5$ are mutually independent, since they are sampled at different epochs. The estimated optimal phase-shifting parameters $(\hat{\beta}^*_1,\hat{\beta}^*_2)$, are given by \eqref{eqn:estimatedbeta1} and \eqref{eqn:estimatedbeta2}, where the values of $\hat{\alpha}_1^{(1)}$, $\hat{\alpha}_1^{(2)}$, $\hat{\alpha}_1^{(3)}$, $\hat{\alpha}_1^{(4)}$, and, $\hat{\alpha}_2^{(1)}$, $\hat{\alpha}_2^{(2)}$, $\hat{\alpha}_2^{(3)}$, and $\hat{\alpha}_2^{(4)}$ are given by
\begin{align}
    \hat{\alpha}_1^{(1)/(2)} &= \beta^0_1 + \frac{v}{ 1 \pm \sqrt{\abs{\frac{\hat{\mu}\left(\beta^0_1, \beta^0_2\right) - \sigma^2}{\hat{\mu}\left(\beta^0_1 + v, \beta^0_2\right) - \sigma^2}}}} \label{eqn:hatbeta112}\\
    \hat{\alpha}_1^{(3)/(4)} &=  \beta^0_1 - \frac{v}{ 1 \pm \sqrt{\abs{\frac{\hat{\mu}\left(\beta^0_1, \beta^0_2\right) - \sigma^2}{\hat{\mu}\left(\beta^0_1 - v, \beta^0_2\right) - \sigma^2}}}}\label{eqn:hatbeta123}\\
    \hat{\alpha}_2^{(1)/(2)} &=  \beta^0_2 + \frac{w}{1 \pm \sqrt{\abs{\frac{\hat{\mu}\left(\beta^0_1, \beta^0_2\right) - \sigma^2}{\hat{\mu}\left(\beta^0_1, \beta^0_2 + w\right) - \sigma^2}}}} \label{eqn:hatbeta212}\\
    \hat{\alpha}_2^{(3)/(4)} &= \beta^0_2 - \frac{w}{1 \pm \sqrt{\abs{\frac{\hat{\mu}\left(\beta^0_1, \beta^0_2\right) - \sigma^2}{\hat{\mu}\left(\beta^0_1, \beta^0_2 - w\right) - \sigma^2}}}}.\label{eqn:hatbeta223}
\end{align}
By inserting \eqref{eqn:empmeanbeta1beta2}, \eqref{eqn:empmeansbeta1nextbeta2}, \eqref{eqn:empmeansbeta1prevbeta2}, \eqref{eqn:empmeansbeta1beta2next}, and \eqref{eqn:empmeansbeta1beta2prev} into \eqref{eqn:hatbeta112}, \eqref{eqn:hatbeta123}, \eqref{eqn:hatbeta212} and \eqref{eqn:hatbeta223}, we obtain 
\begin{align}
    &\hat{\alpha}_1^{(1)/(2)} = \beta^0_1 + \frac{v}{ 1 \pm \sqrt{\abs{\frac{\hat{\mu}_1 - 2n}{\hat{\mu}_2 - 2n}}}}  \label{eqn:alpha1hat12}\\
    &\hat{\alpha}_1^{(3)/(4)} = \beta^0_1 - \frac{v}{ 1 \pm \sqrt{\abs{\frac{\hat{\mu}_1 - 2n}{\hat{\mu}_3 - 2n}}}} \label{eqn:alpha1hat34}\\
    &\hat{\alpha}_2^{(1)/(2)}   = \beta^0_2 + \frac{w}{ 1 \pm \sqrt{\abs{\frac{\hat{\mu}_1 - 2n}{\hat{\mu}_4 - 2n}}}}  \label{eqn:alpha2hat12} \\
    &\hat{\alpha}_2^{(3)/(4)}  =  \beta^0_2 - \frac{w}{ 1 \pm \sqrt{\abs{\frac{\hat{\mu}_1 - 2n}{\hat{\mu}_5 - 2n}}}}.\label{eqn:alpha2hat34}
\end{align}
Let us denote
\begin{align}
    I:= \Prob{\abs{\hat{\beta}^*_1-\alpha_1} \geq \sqrt{\frac{\epsilon}{2}}}\nonumber\\
    II:= \Prob{\abs{\hat{\beta}^*_2-\alpha_2} \geq \sqrt{\frac{\epsilon}{2}}}.\nonumber
\end{align}
Now, applying Lemma \ref{prop:prob_ineq} in \eqref{eqn:ErrProb}, we obtain
\begin{align}
    \mathbb{P}\bigg\{&\bigg(\hat{\beta}^*_1-\alpha_1\bigg)^2 + \bigg(\hat{\beta}^*_2-\alpha_2\bigg)^2 \geq \epsilon\bigg\}\nonumber\\
    &\leq \Prob{\bigg(\hat{\beta}^*_1-\alpha_1\bigg)^2 \geq \frac{\epsilon}{2}} + \Prob{\bigg(\hat{\beta}^*_2-\alpha_2\bigg)^2 \geq \frac{\epsilon}{2}}\nonumber\\
   & \leq I + II. \label{eqn:totalprob}
\end{align}
We upper bound each of the term in right-hand side of \eqref{eqn:totalprob}. We begin with the first term $\Prob{\abs{\hat{\beta}^*_1-\alpha_1} \geq \sqrt{\frac{\epsilon}{2}}}$, denoted as I.

\begin{itemize}
\item[\ding{108}] \textbf{Step 1: Upper bound on I}

From \eqref{eqn:estimatedbeta1}, we have

\begin{align}
    &\Prob{\abs{\hat{\beta}^*_1 - \alpha_1} \geq  \sqrt{\frac{\epsilon}{2}}} \nonumber\\
    &= \mathbb{P}\Bigg\{\left\{\abs{\frac{\hat{\alpha}_1^{(1)} + \hat{\alpha}_1^{(3)}}{2} - \alpha_1}\geq  \sqrt{\frac{\epsilon}{2}} \right\} \nonumber\\
    &\quad\bigcup \left\{\abs{\frac{\hat{\alpha}_1^{(1)} + \hat{\alpha}_1^{(4)}}{2} - \alpha_1}\geq  \sqrt{\frac{\epsilon}{2}} \right\} \nonumber\\
    &\quad\bigcup \left\{\abs{\frac{\hat{\alpha}_1^{(2)} + \hat{\alpha}_1^{(3)}}{2} - \alpha_1}\geq  \sqrt{\frac{\epsilon}{2}} \right\} \nonumber\\
    &\quad\bigcup \left\{\abs{\frac{\hat{\alpha}_1^{(2)} + \hat{\alpha}_1^{(4)}}{2} - \alpha_1}\geq  \sqrt{\frac{\epsilon}{2}} \right\}\Bigg\}\nonumber\\ 
  &\leq \mathbb{P}\bigg\{\abs{\frac{\hat{\alpha}_1^{(1)} + \hat{\alpha}_1^{(3)}}{2} - \alpha_1}\geq  \sqrt{\frac{\epsilon}{2}}\bigg\} \nonumber\\
  &\quad + \mathbb{P}\bigg\{\abs{\frac{\hat{\alpha}_1^{(1)} + \hat{\alpha}_1^{(4)}}{2} - \alpha_1}\geq  \sqrt{\frac{\epsilon}{2}}\bigg\} \nonumber\\
  &\quad + \mathbb{P}\bigg\{\abs{\frac{\hat{\alpha}_1^{(2)} + \hat{\alpha}_1^{(3)}}{2} - \alpha_1}\geq  \sqrt{\frac{\epsilon}{2}}\bigg\} \nonumber\\
  &\quad+ \mathbb{P}\bigg\{\abs{\frac{\hat{\alpha}_1^{(2)} + \hat{\alpha}_1^{(4)}}{2} - \alpha_1}\geq  \sqrt{\frac{\epsilon}{2}}\bigg\}\nonumber\\
    &= \sum_{\substack{i=1,2\\j=3,4}}\Prob{\abs{\bigg(\hat{\alpha}_1^{(i)} - \alpha_1\bigg) + \bigg(\hat{\alpha}_1^{(j)} - \alpha_1\bigg)}\geq  2\sqrt{\frac{\epsilon}{2}}} \nonumber\\
    &\leq \sum_{\substack{i=1,2\\j=3,4}}\bigg[\Prob{\abs{\hat{\alpha}_1^{(i)} - \alpha_1} \geq \sqrt{\frac{\epsilon}{2}}}  + \Prob{\abs{\hat{\alpha}_1^{(j)} - \alpha_1}\geq  \sqrt{\frac{\epsilon}{2}}} \bigg]\nonumber\\
    &= 2\bigg(\Prob{\abs{\hat{\alpha}_1^{(1)} - \alpha_1} \geq  \sqrt{\frac{\epsilon}{2}}}  + \Prob{\abs{\hat{\alpha}_1^{(2)} - \alpha_1} \geq  \sqrt{\frac{\epsilon}{2}}} \nonumber\\
    &\quad + \Prob{\abs{\hat{\alpha}_1^{(3)} - \alpha_1} \geq  \sqrt{\frac{\epsilon}{2}}} + \Prob{\abs{\hat{\alpha}_1^{(4)} - \alpha_1} \geq  \sqrt{\frac{\epsilon}{2}}}\bigg)\label{eqn:(2)},
\end{align}
where we applied the union bound to get the first inequality and applied Lemma~\ref{prop:prob_ineq} for the second inequality. We now bound each term in \eqref{eqn:(2)} separately.

\begin{itemize}
    \item[\ding{228}] \textbf{Upper bound of $\boldsymbol{\Prob{\abs{\hat{\alpha}_1^{(1)} - \alpha_1} \geq  \sqrt{\frac{\epsilon}{2}}}}$:} Substituting the values of $\hat{\alpha}_1^{(1)}$, as given by \eqref{eqn:alpha1hat12}, in $\boldsymbol{\Prob{\abs{\hat{\alpha}_1^{(1)} - \alpha_1} \geq  \sqrt{\frac{\epsilon}{2}}}}$,  we obtain
    \begin{align}
&\Prob{\abs{\hat{\alpha}_1^{(1)} - \alpha_1)} \geq  \sqrt{\frac{\epsilon}{2}}}  \nonumber\\
    &= \Prob{\abs{\frac{v}{ 1 + \sqrt{\abs{\frac{\hat{\mu}_1 - 2n}{\hat{\mu}_2 - 2n}}}} - (\alpha_1 - \beta^0_1)} \geq  \sqrt{\frac{\epsilon}{2}}}. 
   \label{eqn:hatbeta1ub1}
   \end{align}
  Note that the following holds.
    \begin{align}
     \abs{\frac{v}{ 1 + \sqrt{\abs{\frac{\hat{\mu}_1 - 2n}{\hat{\mu}_2 - 2n}}}} - (\alpha_1 - \beta^0_1)} &\leq \abs{\frac{v}{ 1 + \sqrt{\abs{\frac{\hat{\mu}_1 - 2n}{\hat{\mu}_2 - 2n}}}}} +\Bigg|\alpha_1 - \beta^0_1\Bigg|. \label{eqn:mod_bound}
     \end{align}
 By applying \eqref{eqn:mod_bound} in \eqref{eqn:hatbeta1ub1}, we obtain
  \begin{align}
    &\Prob{\abs{\hat{\alpha}_1^{(1)} - \alpha_1)} \geq  \sqrt{\frac{\epsilon}{2}}} \nonumber\\
      & \leq \Prob{\abs{\frac{1}{1 + \sqrt{\abs{\frac{\hat{\mu}_1 - 2n}{\hat{\mu}_2 - 2n}}}}} \geq  \frac{1}{v}\left( \sqrt{\frac{\epsilon}{2}}- \Bigg|\alpha_1 - \beta^0_1\Bigg|\right)}.\label{eqn:hatbeta1ub2}
      \end{align}
    For the RVs $\hat{\mu}_1$ and $\hat{\mu}_2$, $\frac{1}{1 + \sqrt{\abs{\frac{\hat{\mu}_1 - 2n}{\hat{\mu}_2 - 2n}}}}$ is always positive. Using this fact in \eqref{eqn:hatbeta1ub2}, we obtain
\begin{align}
  &\Prob{\abs{\hat{\alpha}_1^{(1)} - \alpha_1)} \geq  \sqrt{\frac{\epsilon}{2}}}      \nonumber\\
  &\leq \Prob{\frac{1}{1 + \sqrt{\abs{\frac{\hat{\mu}_1 - 2n}{\hat{\mu}_2 - 2n}}}} \geq  \frac{1}{v}\left( \sqrt{\frac{\epsilon}{2}}- \Bigg|\alpha_1 - \beta^0_1\Bigg|\right)}.\nonumber%\label{eqn:hatbeta1ub2}
\end{align}
Let $a = \frac{1}{v}\left( \sqrt{\frac{\epsilon}{2}}- \abs{\alpha_1 - \beta^0_1}\right).$ We have
\begin{align}
 \Prob{\abs{\hat{\alpha}_1^{(1)} - \alpha_1)} \geq  \sqrt{\frac{\epsilon}{2}}} &\leq \Prob{1 + \sqrt{\abs{\frac{\hat{\mu}_1 - 2n}{\hat{\mu}_2 - 2n}}} \leq \frac{1}{a}}\nonumber\\
&= \Prob{\abs{\frac{\hat{\mu}_1 - 2n}{\hat{\mu}_2 - 2n}} \leq \left(1 - \frac{1}{a}\right)^2}. \label{eqn:boundhatalpha11}   
\end{align}
    
    \item[\ding{228}] \textbf{Upper bound of $\boldsymbol{\Prob{\abs{\hat{\alpha}_1^{(2)} - \alpha_1} \geq  \sqrt{\frac{\epsilon}{2}}}}$:} Substituting the values of $\hat{\alpha}_1^{(2)}$, as given by \eqref{eqn:alpha1hat12}, in $\boldsymbol{\Prob{\abs{\hat{\alpha}_1^{(2)} - \alpha_1} \geq  \sqrt{\frac{\epsilon}{2}}}}$, we obtain
    \begin{align}
        &\Prob{\abs{\hat{\alpha}_1^{(2)} - \alpha_1)} \geq  \sqrt{\frac{\epsilon}{2}}}  \nonumber\\
        &= \Prob{\abs{\frac{v}{ 1 - \sqrt{\abs{\frac{\hat{\mu}_1 - 2n}{\hat{\mu}_2 - 2n}}}} - (\alpha_1 - \beta^0_1)} \geq  \sqrt{\frac{\epsilon}{2}}}.\label{eqn:estalpha12}
     \end{align}   
    Note that the following holds.
     \begin{align}
       \abs{\frac{v}{ 1 - \sqrt{\abs{\frac{\hat{\mu}_1 - 2n}{\hat{\mu}_2 - 2n}}}} - (\alpha_1 - \beta^0_1)} &\leq \abs{\frac{v}{ 1 - \sqrt{\abs{\frac{\hat{\mu}_1 - 2n}{\hat{\mu}_2 - 2n}}}}} +\Bigg|\alpha_1 - \beta^0_1\Bigg|. \label{eqn:mod_bound2}  
      \end{align}
     By applying \eqref{eqn:mod_bound2} in the right-hand side of \eqref{eqn:estalpha12}, we obtain
     \begin{align}
    &\Prob{\abs{\hat{\alpha}_1^{(2)} - \alpha_1)} \geq  \sqrt{\frac{\epsilon}{2}}}  \nonumber\\
     &\leq \Prob{\abs{1 - \sqrt{\abs{\frac{\hat{\mu}_1 - 2n}{\hat{\mu}_2 - 2n}} }} \leq \frac{1}{a}}\nonumber\\
     &= \Prob{\left(1 - \frac{1}{a}\right)^2 \leq \abs{\frac{\hat{\mu}_1 - 2n}{\hat{\mu}_2 - 2n}} \leq \left(1 + \frac{1}{a}\right)^2}\nonumber\\
    % \Prob{\abs{\hat{\alpha}_1^{(2)} - \alpha_1)} \geq  \sqrt{\frac{\epsilon}{2}}} 
     &\leq \Prob{\abs{\frac{\hat{\mu}_1 - 2n}{\hat{\mu}_2 - 2n}} \leq \left(1 + \frac{1}{a}\right)^2} - \Prob{\abs{\frac{\hat{\mu}_1 - 2n}{\hat{\mu}_2 - 2n}} \leq \left(1 - \frac{1}{a}\right)^2}.  \label{eqn:boundhatalpha12}
    \end{align}
    
    \item[\ding{228}] \textbf{Upper bound of $\boldsymbol{\Prob{\abs{\hat{\alpha}_1^{(3)} - \alpha_1} \geq  \sqrt{\frac{\epsilon}{2}}}}$:} Substituting the values of $\hat{\alpha}_1^{(3)}$, as given by \eqref{eqn:alpha1hat34}, in $\boldsymbol{\Prob{\abs{\hat{\alpha}_1^{(3)} - \alpha_1} \geq  \sqrt{\frac{\epsilon}{2}}}}$ and following similar steps to bound $\boldsymbol{\Prob{\abs{\hat{\alpha}_1^{(1)} - \alpha_1} \geq  \sqrt{\frac{\epsilon}{2}}}}$, we obtain 
    \begin{align}
          \Prob{\abs{\hat{\alpha}_1^{(3)} - \alpha_1)} \geq  \sqrt{\frac{\epsilon}{2}}} &\leq 
    \Prob{\abs{\frac{\hat{\mu}_1 - 2n}{\hat{\mu}_3 - 2n}} \leq \left(1 - \frac{1}{a}\right)^2}. \label{eqn:boundhatalpha13}
    \end{align}
    
    \item[\ding{228}] \textbf{Upper bound of $\boldsymbol{\Prob{\abs{\hat{\alpha}_1^{(4)} - \alpha_1} \geq  \sqrt{\frac{\epsilon}{2}}}}$:} Substituting the values of $\hat{\alpha}_1^{(4)}$, as given by \eqref{eqn:alpha1hat34}, in $\boldsymbol{\Prob{\abs{\hat{\alpha}_1^{(4)} - \alpha_1} \geq  \sqrt{\frac{\epsilon}{2}}}}$ and following similar steps to bound $\boldsymbol{\Prob{\abs{\hat{\alpha}_2^{(2)} - \alpha_1} \geq  \sqrt{\frac{\epsilon}{2}}}}$, we obtain 
    \begin{align}
         &\Prob{\abs{\hat{\alpha}_1^{(4)} - \alpha_1)} \geq  \sqrt{\frac{\epsilon}{2}}} \nonumber\\
         &\leq 
    \Prob{\abs{\frac{\hat{\mu}_1 - 2n}{\hat{\mu}_3 - 2n}} \leq \left(1 + \frac{1}{a}\right)^2} - \Prob{\abs{\frac{\hat{\mu}_1 - 2n}{\hat{\mu}_3 - 2n}} \leq \left(1 - \frac{1}{a}\right)^2}.  \label{eqn:boundhatalpha14} 
    \end{align}
\end{itemize}
By inserting the bounds \eqref{eqn:boundhatalpha11}, \eqref{eqn:boundhatalpha12}, \eqref{eqn:boundhatalpha13} and \eqref{eqn:boundhatalpha14} to \eqref{eqn:(2)} we obtain
\begin{align}
     &\Prob{\abs{\hat{\beta}^*_1 - \alpha_1} \geq  \sqrt{\frac{\epsilon}{2}}}\nonumber\\
    &\leq 2\left(\Prob{\abs{\frac{\hat{\mu}_2 - 2n}{\hat{\mu}_1 - 2n}} \geq \gamma_1} + \Prob{\abs{\frac{\hat{\mu}_3 - 2n}{\hat{\mu}_1 - 2n}} \geq \gamma_1}\right), \label{eqn:boundtotalhatalpha1}
\end{align}
where we set $\gamma_1 = \left(\frac{1}{1+(1/a)}\right)^2.$

We next upper bound each term on the right-hand side of \eqref{eqn:boundtotalhatalpha1}. The bounds are derived using the properties of the sub-exponential distributions which we introduce below.

\item[\ding{108}] \textbf{Step 2: Sub-exponential Distributions and its Tail Bound}
\begin{defn}[sub-exponential distribution]
\label{defn:Subexponential}
 A RV $X$ with mean $\mu$ is said to be sub-exponential with parameters $(\nu,\alpha)$, for $\alpha > 0$, if    
  \[\EE{\exp\bigg(t(X-\mu)\bigg)}\leq \exp\left(\frac{t^2\nu^2}{2}\right), \text{ for } |t| < \frac{1}{\alpha}.\] 
\end{defn} 
%   \textcolor{red}{MH: what is the value for $|t|\geq 1/\alpha$? Is it zero? Make the definition complete.}
%   \textcolor{blue}{DG: Given the value of $\alpha$.}

\begin{thm}[\cite{wainwright2019high}]
\label{thm:sub_exp_tail_bound}
Let $X_k$ for $k=1,2,\dots,n$ be independent RVs where $X_k$ is sub-exponential with parameters $(\nu_k,b_k),$ and mean $\mu_k= \EE{X_k}$. 
Then $\sum\limits_{k=1}^n(X_k - \mu_k)$ is a sub-exponential RV with parameters $(\nu_{*},b_{*})$ where
\[b_{*} = \max\limits_{k=1,2\dots,n} b_k ,\]
 and \[ \nu_{*} = \sqrt{\sum\limits_{k=1}^n \nu_k^2}.\]
Furthermore, its tail probability can be bounded as
\begin{align}
%\label{eqn:subexptailbound}
    &\Prob{\abs{\frac{1}{n}\sum\limits_{k=1}^n(X_k - \mu_k)} \geq t} \leq
    \begin{cases}
    2e^{-\frac{nt^2}{2(\nu_{*}^2/n)}}, &\text{for } 0\leq t \leq \frac{\nu_{*}^2}{nb_{*}}\\
    2e^{-\frac{nt}{2b_{*}}}, &\text{for } t \geq \frac{\nu_{*}^2}{nb_{*}}.\nonumber
   % 1,  &\text{for } t < 0
    \end{cases}
\end{align}
\end{thm}
\begin{proof}
The proof is given in Appendix C.
\end{proof}

\begin{cor}
\label{cor:sub_exp_tail_bound}
Let $X_k$ for $k=1,2\dots,n$ be i.i.d. sub-exponential RVs with parameters $(2(2+2a), 4)$ each with mean $2+a$. Then,
\begin{align}
%\label{eqn:ncptailbound}
\Prob{\abs{\frac{1}{n}\sum\limits_{k=1}^n(X_k - \mu_k)} \geq t}  
\leq 2e^{-\frac{n t^2}{8(2+2a)^2}}, \qquad \text{for } t > 0.\nonumber
\end{align}
\end{cor}

\begin{proof}
The proof is given in then Appendix D.
\end{proof}

We use Corollary \ref{cor:sub_exp_tail_bound} to upper bound of the right-hand side terms in   \eqref{eqn:boundtotalhatalpha1}. The following lemma establishes the connection between the non-central chi-squared distribution and the sub-exponential distributions.
\begin{lem}
\label{lem:XSubExp}
Let $X\sim\chi_p^2(a)$. Then, $X$ is sub-exponential with parameters $\big(2(p+2a), 4\big).$ 
\end{lem}
\begin{proof}
The proof is given in then Appendix E.
\end{proof}

\item[\ding{108}] \textbf{Step 3: Upper Bounding Eq. \eqref{eqn:boundtotalhatalpha1}}

\begin{itemize}
    \item Recall that $\hat{\mu}_1\sim \chi_{2n}^2(n\lambda_1)$ and $\hat{\mu}_2\sim\chi_{2n}^2(n\lambda_2)$. Let $f_{\hat{\mu}_1}$ denote the pdf of $\hat{\mu}_1$. We upper bound the term $ \Prob{\abs{\frac{\hat{\mu}_2 - 2n}{\hat{\mu}_1 - 2n}} \geq \gamma_1}$ as follows
    \begin{align}
     &\Prob{\abs{\frac{\hat{\mu}_2 - 2n}{\hat{\mu}_1 - 2n}} \geq \gamma_1} \nonumber\\
     &= \int\limits_{0}^{\infty}\Prob{\bigg|\hat{\mu}_2-2n\bigg| \geq \gamma_1\bigg|u-2n\bigg|}f_{\hat{\mu}_1}(u)du\nonumber\\
     &= \int\limits_{0}^{\infty}\Prob{\bigg|\hat{\mu}_2 - 2n - n\lambda_2 + n\lambda_2\bigg|\geq \gamma_1\bigg|u-2n\bigg|}f_{\hat{\mu}_1}(u)du\nonumber\\
    & \leq \int\limits_{0}^{\infty}\Prob{\frac{1}{n}\bigg|\hat{\mu}_2-n(2+\lambda_2)\bigg| \geq \frac{\gamma_1\abs{u-2n} - n\lambda_2}{n}}f_{\hat{\mu}_1}(u)du \label{eqn:probYbound1}
\end{align}

Note that, if $\frac{\gamma_1\abs{u-2n} - n\lambda_2}{n} < 0$, then $\Prob{\abs{\frac{\hat{\mu}_2 - 2n}{\hat{\mu}_1 - 2n}} \geq \gamma_1} \leq 1$ as $\Prob{\frac{1}{n}\bigg|\hat{\mu}_2-n(2+\lambda_2)\bigg| \geq \frac{\gamma_1\abs{u-2n} - n\lambda_2}{n}} = 1$, which is trivial. 

For $\frac{\gamma_1\abs{u-2n} - n\lambda_2}{n} \geq 0$, using the assumption $0 \leq \epsilon \leq 1$ in \eqref{eqn:probYbound1}, we have
% \begin{align}
%     \frac{\gamma_1\abs{u-2n} - n\lambda_2}{n} \geq \epsilon \left(\frac{\gamma_1\abs{u-2n} - n\lambda_2}{n}\right). \label{eqn:t1}
% \end{align}
% Applying \eqref{eqn:t1} to \eqref{eqn:probYbound1}, we obtain
\begin{align}
    &\Prob{\abs{\frac{\hat{\mu}_2 - 2n}{\hat{\mu}_1 - 2n}} \geq \gamma_1} \nonumber\\
    &\leq \int\limits_{0}^{\infty}\mathbb{P}\bigg\{\frac{1}{n}\bigg|\hat{\mu}_2-n(2+\lambda_2)\bigg| \geq \epsilon \left(\frac{\gamma_1\abs{u-2n} - n\lambda_2}{n}\right)\bigg\}\nonumber\\
    &\qquad \qquad \times f_{\hat{\mu}_1}(u)du. \label{eqn:newprobYbound1}
\end{align}
The last inequality follows from Lemma~\ref{prop:prob_ineq}. Let $t_1:=t_1(u) =\epsilon \left(\frac{\gamma_1\abs{u-2n} - n\lambda_2}{n}\right)$. As $\EE{\hat{\mu}_2}=2n+n\lambda_2$, by applying Corollary~\ref{cor:sub_exp_tail_bound}, we obtain
\begin{align}
\Prob{\frac{1}{n}\bigg|\hat{\mu}_2-n(2+\lambda_2)\bigg| \geq t_1}
 & \leq    2e^{-\frac{nt_1^2}{8(2+2\lambda_2)^2}}, \qquad t_1 \geq 0.  
 \label{eqn:corY}
\end{align}

By applying \eqref{eqn:corY} to \eqref{eqn:newprobYbound1}, we obtain
\begin{align}
  &\Prob{\bigg|\hat{\mu}_2-2n\bigg| \geq \gamma_1\bigg|\hat{\mu}_1-2n\bigg|} \nonumber\\
  &\leq \int\limits_{0}^{\infty}2e^{-\frac{nt_1^2}{8(2+2\lambda_2)^2}}f_{\hat{\mu}_1}(u)du, \nonumber\\%\label{eqn:Yboundint}\\
  &= \int\limits_{0}^{2n}2e^{-\frac{n\bigg(\frac{\epsilon}{n} \big(\gamma_1(2n - u) - n\lambda_2\big)\bigg)^2}{8(2+2\lambda_2)^2}}f_{\hat{\mu}_1}(u)du \nonumber\\
  &\quad + \int\limits_{2n}^{\infty}2e^{-\frac{n\bigg(\frac{\epsilon}{n} \big(\gamma_1(u-2n) - n\lambda_2\big)\bigg)^2}{8(2+2\lambda_2)^2}}f_{\hat{\mu}_1}(u)du.\label{eqn:intbound1}
  \end{align}

 For $0 \leq u \leq 2n$, we have
  \begin{align}
  2e^{-\frac{n\bigg(\frac{\epsilon}{n} \big(\gamma_1(2n - u) - n\lambda_2\big)\bigg)^2}{8(2+2\lambda_2)^2}} \leq 2e^{-\frac{n\big(\epsilon\lambda_2\big)^2}{8(2+2\lambda_2)^2}}.\label{eqn:intuppbound1}
  \end{align}
  For $2n \leq u \leq \infty$, we have
  \begin{align}
2e^{-\frac{n\bigg(\frac{\epsilon}{n} \big(\gamma_1(u-2n) - n\lambda_2\big)\bigg)^2}{8(2+2\lambda_2)^2}} \leq 2e^{-\frac{n\big(\epsilon\lambda_2\big)^2}{8(2+2\lambda_2)^2}}.\label{eqn:intuppbound2}
 \end{align}
Using \eqref{eqn:intuppbound1} and \eqref{eqn:intuppbound2} in \eqref{eqn:intbound1}, we obtain
 \begin{align}
   &\Prob{\bigg|\hat{\mu}_2-2n\bigg| \geq \gamma_1\bigg|\hat{\mu}_1-2n\bigg|}   \nonumber\\
   &\leq 2e^{-\frac{n\big(\epsilon\lambda_2\big)^2}{8(2+2\lambda_2)^2}}\Prob{0<\hat{\mu}_1<2n} \nonumber\\
   & + 2e^{-\frac{n\big(\epsilon\lambda_2\big)^2}{8(2+2\lambda_2)^2}}\Prob{\hat{\mu}_1>2n}\nonumber\\
  &\Prob{\bigg|\hat{\mu}_2-2n\bigg| \geq \gamma_1\bigg|\hat{\mu}_1-2n\bigg|} \leq 2e^{-\frac{n\big(\epsilon\lambda_2\big)^2}{8(2+2\lambda_2)^2}}.\label{eqn:boundY}
\end{align}

\item  We next upper bound $ \Prob{\abs{\frac{\hat{\mu}_3 - 2n}{\hat{\mu}_1 - 2n}} \geq \gamma_1}$. 
Set $t_2 = \epsilon \left(\frac{\gamma_1\abs{u-2n} - n\lambda_3}{n}\right)$. Recall that $\hat{\mu}_3\sim \chi_{2n}^2(n\lambda_3)$. Following the steps similar to the derivation of the bound in \eqref{eqn:boundY}, we obtain
\begin{align}
 \Prob{\bigg|\hat{\mu}_3-2n\bigg| \geq \gamma_1\bigg|\hat{\mu}_1-2n\bigg|} &\leq
2e^{-\frac{n\big(\epsilon\lambda_3\big)^2}{8(2+2\lambda_3)^2}}.\label{eqn:boundZ}
 \end{align}
 \end{itemize}

Combining \eqref{eqn:boundY} and \eqref{eqn:boundZ} we obtain the following upper bound on \eqref{eqn:boundtotalhatalpha1}
\begin{align}
    \Prob{\abs{\hat{\beta}^*_1 - \alpha_1} \geq \sqrt{\frac{\epsilon}{2}}} &\leq 4\bigg( e^{-\frac{n}{32}\big(\frac{\epsilon\lambda_2}{1+\lambda_2}\big)^2} + e^{-\frac{n}{32}\big(\frac{\epsilon\lambda_3}{1+\lambda_3}\big)^2}\bigg). \label{eqn:Ibound}
\end{align}

\item[\ding{108}] \textbf{Step 4: Upper bound on II}

By following the same steps for deriving the upper bound of $\Prob{\abs{\hat{\beta}^*_1-\alpha_1} \geq \sqrt{\frac{\epsilon}{2}}}$, %\eqref{eqn:union_bound_est_beta1} - \eqref{eqn:bound_Z},
we can obtain the following bound %$\Prob{\abs{\hat{\beta}^*_2 - \alpha_2} \geq  \sqrt{\frac{\epsilon}{2}}},$ where we set $\gamma_2 = \left(1+ \frac{v}{ \sqrt{\frac{\epsilon}{2}} - \abs{\alpha_2 - \hat{\beta}_2}}\right)^{-2}$ as given by
\begin{align}
    \Prob{\abs{\hat{\beta}^*_2 - \alpha_2} \geq \sqrt{\frac{\epsilon}{2}}} &\leq  4\bigg( e^{-\frac{n}{32}\big(\frac{\epsilon\lambda_4}{1+\lambda_4}\big)^2} + e^{-\frac{n}{32}\big(\frac{\epsilon\lambda_5}{1+\lambda_5}\big)^2}\bigg).\label{eqn:IIbound}
\end{align}
Combining \eqref{eqn:Ibound} and \eqref{eqn:IIbound}, we obtain the required upper bound in \eqref{eqn:totalboundalpha1alpha2}.\qedhere
\end{itemize}
\end{proof}

\section{Numerical Simulations}
\label{sec:Numericals}
We estimate the initial value of $(\beta^0_1, \beta^0_2)$ as given in \cite[Sec. V.C]{ghermezcheshmeh2021channel}. Based on the the initial value of $(\beta^0_1, \beta^0_2)$ we set $\mathcal{B}$ as given in \eqref{eqn:setpar},
% \begin{align*}
% \mathcal{B} &= \bigg\{\left(\beta^0_1, \beta^0_2\right), \left(\beta^0_1 + v, \beta^0_2\right), \left(\beta^0_1 - v, \beta^0_2\right), \left(\beta^0_1, \beta^0_2 + w\right), \left(\beta^0_1, \beta^0_2 - w\right)\bigg\},
% \end{align*}
where $v$ and $w$ are selected such that $K_xv \in \mathbb{N}$ and $K_yw \in \mathbb{N}$ respectively. In addition to the LoS path, we assume that there are 4 NLoS path components due to scatters between the user and the HMT. The elevation and azimuth angles of each NLoS path from these scatters to the center of HMT follow the uniform distribution, i.e., $U(0, 2\pi)$. Moreover, we consider the path coefficient of each NLoS path as a complex Gaussian distribution, i.e., $CN(0,\sigma_s^2)$, where $\sigma_s^2$ is 20 dB weaker than the power of the LoS component \cite{TWC2021BTformmwaveIRSwang}. The system parameters for numerical simulations are listed in Table \ref{Table 1:}.
%\vspace{-2mm}

\begin{table}[ht]
\caption{A list of system parameters for numerical simulations}
\label{Table 1:}
\centering
\begin{tabular}{|p{2cm}|p{1.5cm}|p{4.2cm}|}
\hline
\toprule
    \textbf{Parameters} &\textbf{Values} &\textbf{Description}\\
    \hline
    \midrule
    \small{$f_c$} & \small{30 GHz} & \small{Carrier frequency}\\ \hline
    \small{$\lambda$} &\small{1 cm} &\small{Wavelength}\\ \hline
    \small{$L_x$} &\small{1 m} &\small{Width of the HMT}\\ \hline
    \small{$L_y$} &\small{1 m} &\small{Length of the HMT}\\ \hline
    \small{$d_r$} &\small{$\lambda/4$} &\small{Unit element spacing}\\ \hline
    \small{$L_e$} &\small{$d_r$} &\small{Width and length of each phase-shifting element}\\ \hline
    \small{$P$} &\small{20 dBm} &\small{Transmission power of the HMT during data transmission}\\ \hline
    \small{$\sigma^2$} &\small{-115 dBm} &\small{Noise power for 200 KHz \footnotemark}\\\hline
\end{tabular}
\end{table}
\vspace{-4mm}

\subsection{Comparison Between the Proposed Algorithm and Benchmark Scheme}
According to the approximated channel model, where the phase-shift parameters at the HMT are given by $\beta_1$ and $\beta_2$, the achieved data rate at the user of the HMT-assisted wireless communication system is given by
\begin{align}
   \label{eqn:achievablerates}
    R(\beta_1,\beta_2) &= log_2\left(1 +
 \frac{P\abs{H(\beta_1,\beta_2)}^2}{\sigma^2}\right),
\end{align}
where $P$ is the transmission power at the HMT. The HMT uses the acquired CSI during the channel estimation period to maximize the received data rate by the user. Hence, we consider the achieved data rate by the user, using the acquired CSI as a performance metric. We applied the proposed algorithm in two different cases when the distance between the user and the center of the HMT $(d_0 =200$ m and when $d_0 = 10$ m. We compared our proposed algorithm with two benchmarks, the proposed algorithm in \cite{ghermezcheshmeh2021channel} and the oracle scheme where $\alpha_1$ and $\alpha_2$ are estimated perfectly and thereby the maximum rate is achieved.

In Fig. \ref{fig:Rate_vs_Algo_with_initial_choice}, we compared the achievable rates, given by \eqref{eqn:achievablerates}, of the proposed scheme and the benchmark schemes. We considered both $d_0 = 200$ m and $d_0 = 10$ m regions of the HMT with respect to the transmit power of the pilot signals when the number of the pilot signals is fixed to 23. For all the algorithms we use the same number of pilots, i.e. 23, for both the cases. Our proposed algorithm uses four pilots in each epoch and there are five epochs, which makes the total number of pilots equals to 20. We require additional three number of pilots to estimate $(\beta^0_1, \beta^0_2).$ We run the simulation for 1000 times. We see that in both cases, the proposed \ref{algo:twostagealgo} gives higher rates than other two benchmark schemes.%  the achievable rates decreases significantly as the user is far from the HMT. However, for both the cases we see that 
% \begin{figure}[ht]
% \vspace{-5mm}
% 	\centering	\includegraphics[scale = 0.55]{}
% 	\caption{\small{Achievable rate vs. the transmit power of the pilot signals (in dBm).}}
% 	\label{fig:Rate_vs_Algo_with_initial_choice}
% 	\vspace{-5mm}
% \end{figure}
\begin{figure}[ht]
\vspace{-5mm}
	\centering	\includegraphics[scale = 0.55]{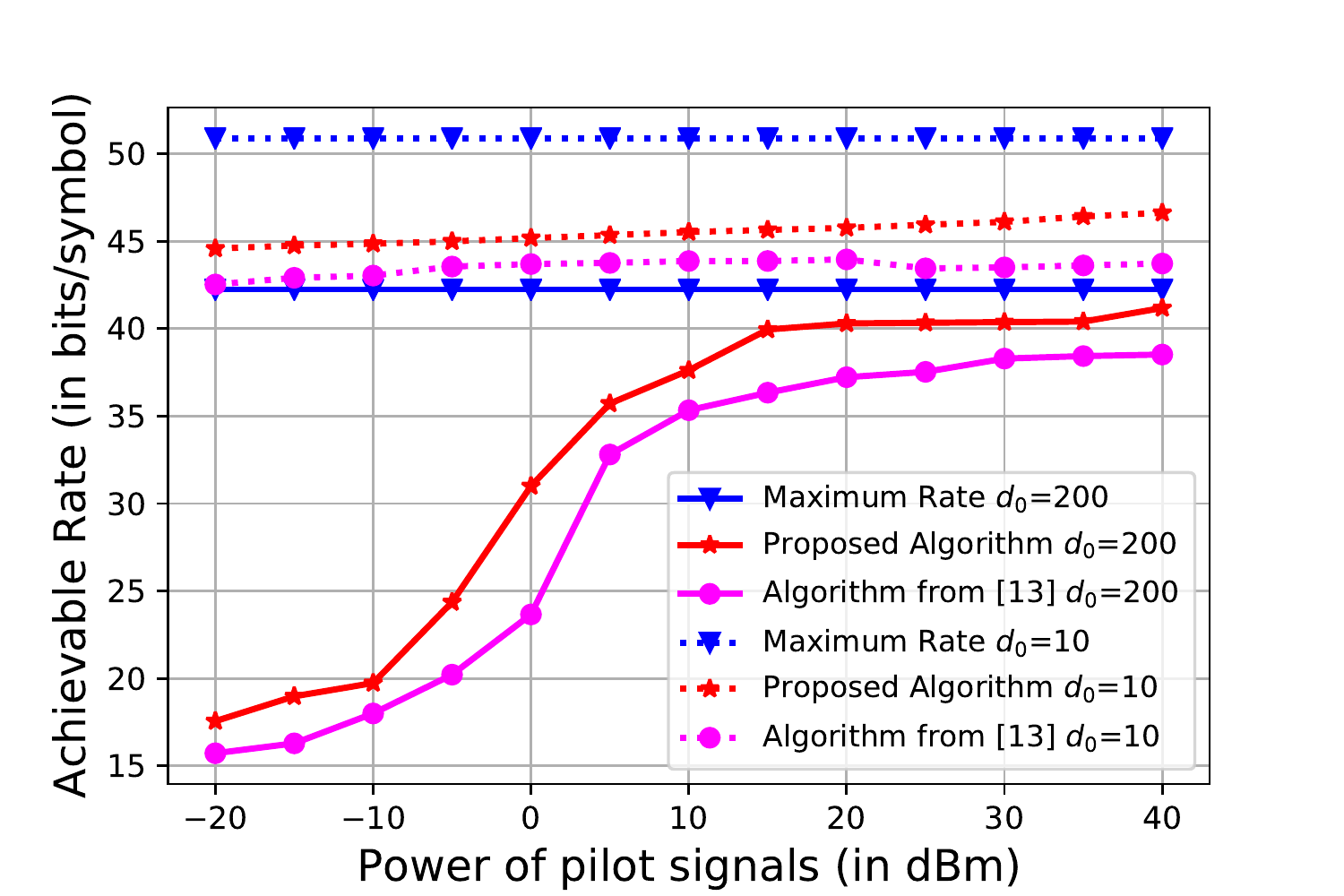}
	\caption{\small{Achievable rate vs. the transmit power of the pilot signals (in dBm).}}
	\label{fig:Rate_vs_Algo_with_initial_choice}
	\vspace{-5mm}
\end{figure}

\footnotetext{This setting corresponds to the noise power spectrum density at the HMT is $-174$ dBm/Hz and signal bandwidth is $200$ KHz, assuming the noise figure of each user to be 6 dB \cite{zhang2022beam}.}

\subsection{Convergence of The Proposed Algorithm}
%\vspace{-1mm}
We now numerically evaluate the convergence of the upper bound of the proposed algorithm, given by \eqref{eqn:totalboundalpha1alpha2}. We also compare the actual probability, given by \eqref{eqn:ErrProb}, that we obtain by simulations.

In Fig. \ref{fig:Probability_Bound_alpha1_samples_epsilon_0.02_0.9}, we show the convergence property of the error probability and its upper bound of the proposed algorithm for increasing values of $\epsilon = \{0.01, 0.05, 0.1\}$ when the power of the pilot signal is $P = 10$ dBm and $d_0 = 200$ m. We run the simulation for 1000 times. We see that for each value of $\epsilon$, the proposed algorithm converges towards zero as we increase the number of pilots. Moreover, the upper bound of the error probability also converges to the error probability as the number of pilot signals increases. 
\begin{figure}[ht]
\vspace{-5mm}
\centering
 	\includegraphics[scale = 0.6]{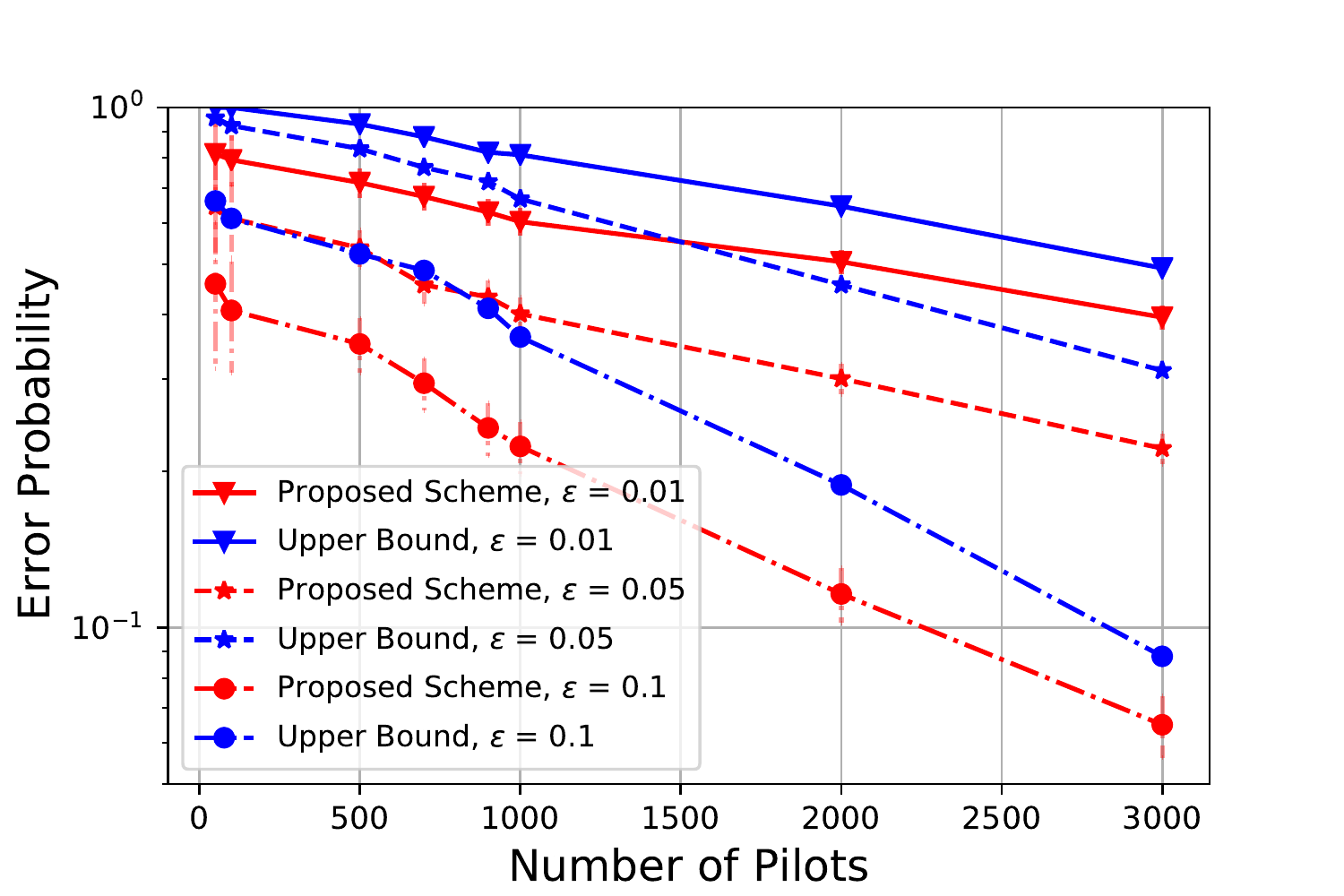}
\caption{\small{Error Probability Bound v/s Number of Pilots for $\epsilon = \{ 0.01, 0.05, 0.1\}$ for $P = 10$ dBm.}}
	\label{fig:Probability_Bound_alpha1_samples_epsilon_0.02_0.9}
	\vspace{-3mm}
\end{figure}

In Fig. \ref{fig:Probability_Bound_alpha1_samples_epsilon_0.1_P_p}, we compare the convergence property of the error probability of the proposed algorithm with respect to $\epsilon = 0.05$ and $d_0 = 200$ m for different levels of power of the pilot signals, $P = \{5, 10, 20\}$ dBm. As we increase the power of the pilot signals, the estimation accuracy of $\alpha_1$ and $\alpha_2$ increases and hence the error probability decreases. This is so because, as we increase the power of pilot signals the received signals will be less noisy which increases the chances of estimating the $\alpha_1$ and $\alpha_2$ more accurately.
\begin{figure}[ht]
\vspace{-5mm}
\centering
	\includegraphics[scale = 0.6]{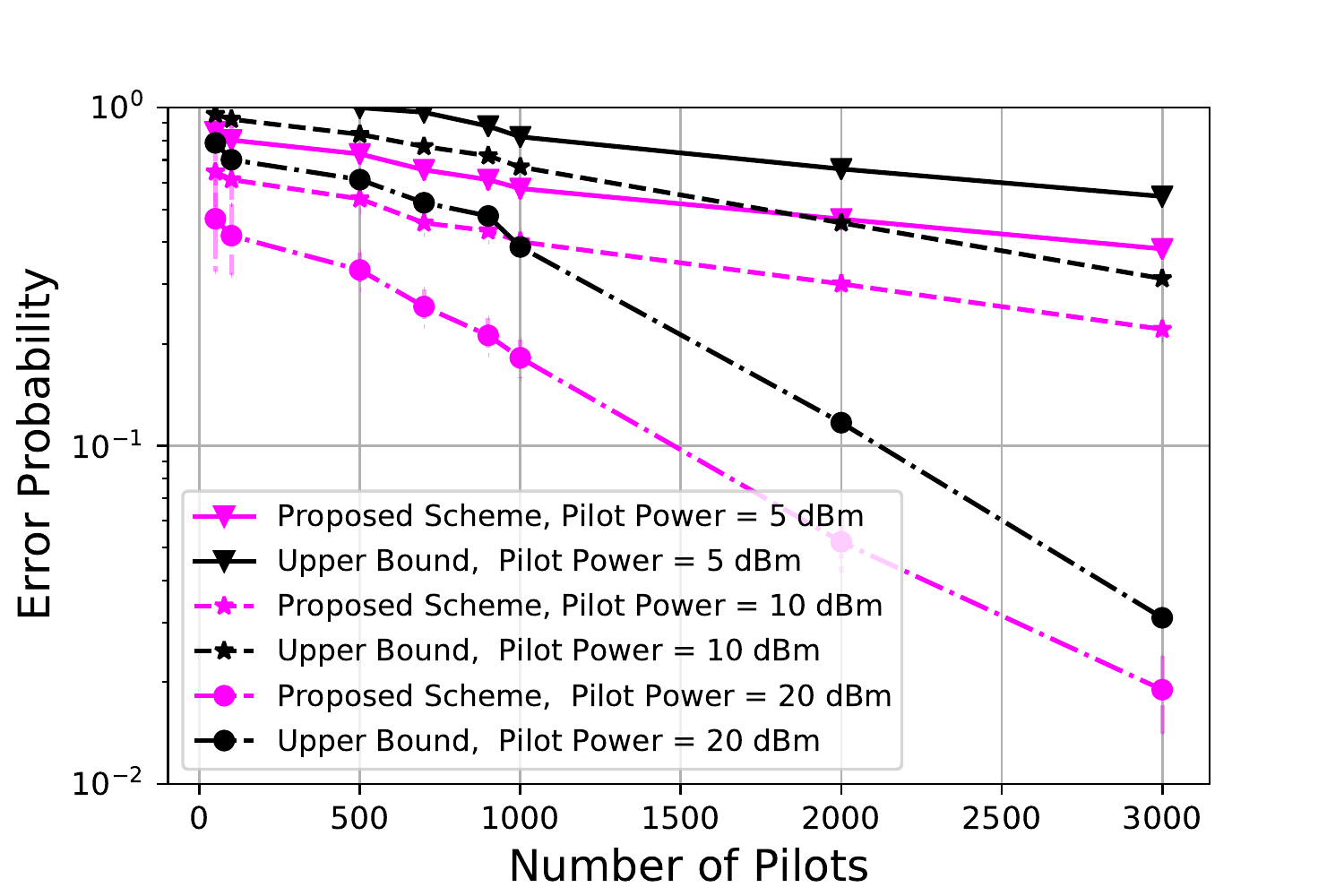}
\caption{\small{Error Probability Bound v/s Number of Pilots for $\epsilon = 0.05$ for $P = \{5, 10, 20\}$ dBm.}}
	\label{fig:Probability_Bound_alpha1_samples_epsilon_0.1_P_p}
	\vspace{-3mm}
\end{figure}

\section{Conclusion}
\label{sec:Conclusions}
We investigated the problem of estimation of the optimal phase-shift at the HMT-assisted wireless communication system in a noisy environment. We proposed a learning algorithm to estimate the optimal phase-shifting parameters and showed that the probability that the phase-shifting parameters generated by the proposed algorithm to deviate by more than $\epsilon$ from the optimal values decay exponentially fast as the number of pilots grows. Our proposed algorithm exploited structural properties of the channel gains in the far-field regions. %Our work focused on a single user case and considered the line-of-sight path. It is interesting to study the model with multiple users and take into account the effect of non-line-of-sight paths. %Further, it is interesting to extend the models and the learning algorithm when the users are mobile. 

\section*{Appendix}
\label{sec:Appendix}
\subsection{Proof of Proposition \ref{prop:prob_ineq}}
\label{subsec:Prop1}

\begin{proof}%[\bf{Proof of Proposition \ref{prop:prob_ineq}}]
Let us define the following events.
\begin{align}
&A_{n,m} = \abs{X_n - X_m}>\epsilon, \quad A_n = \abs{X_n - X}>\frac{\epsilon}{2}, \nonumber\\
&\quad \text{ and } \quad A_m = \abs{X_m - X}> \frac{\epsilon}{2}\nonumber
\end{align}
By the triangle inequality, we have
\begin{align}
\label{eqn:triangleineq}
   \abs{X_n - X_m} \leq \abs{X_n - X} + \abs{X_m - X}. 
\end{align}
Using \eqref{eqn:triangleineq}, the event $A_{n,m}$ can be written as
\begin{align}
&\abs{X_n - X_m} \geq \epsilon \implies \abs{X_n - X} + \abs{X_m - X} \geq \epsilon\nonumber
\end{align}
Therefore, we have
\begin{align}
 A_{n,m} &\subset \{|X_n - X|+|X - X_m|>\epsilon\} \nonumber\\
 &\subset \left\{|X_n - X|>\frac{\epsilon}{2} \bigcup |X - X_m|>\frac{\epsilon}{2}\right\}\label{eqn:set_bound}
\end{align}
Note that for any two events $A$ and $B$ where $A \subset B$, then $\Prob{A} \leq \Prob{B}$. We use this fact in \eqref{eqn:set_bound}, and we get
\begin{align}
& \Prob{|X_n - X_m|>\epsilon} \leq \Prob{|X_n - X|>\frac{\epsilon}{2}} + \Prob{|X_m - X|>\frac{\epsilon}{2}}\nonumber\qedhere
\end{align}
\end{proof}

\subsection{Proof of Lemma \ref{lem:chi_dist}}
\begin{proof}
We consider $r(\beta_1,\beta_2)$ as given in \eqref{eqn:absRSS} which comprises of two complex-valued factors $\sqrt{P}\times H(\beta_1,\beta_2)$ (see  \eqref{eqn:contphaseHMMIMOuserchannel}) and $\zeta$.

Write $\zeta = n_1 + jn_2$, where $n_1$ and $n_2$ follows $N\left(0, \frac{\sigma^2}{2}\right)$ and are independent, and write $\sqrt{P}\times H(\beta_1,\beta_2) = a + jb$, where $a$ and $b$ are real values. Therefore,
\begin{align}
\label{eqn:complexr}
    r(\beta_1,\beta_2) = \abs{y(\beta_1, \beta_2)}^2 = (a+n_1)^2 + (b+n_2)^2.
\end{align}
Note that $\frac{a+n_1}{\sigma/\sqrt{2}} \sim N\left(\frac{a}{\sigma/\sqrt{2}}, 1\right)$ and $\frac{b+n_2}{\sigma/\sqrt{2}} \sim N\left(\frac{b}{\sigma/\sqrt{2}}, 1\right)$ and they are independent. Therefore, 
\begin{align}
\label{eqn:sumofchi}
\frac{2}{\sigma^2}\bigg\{(a+n_1)^2 + (b+n_2)^2\bigg\} \sim \chi^2_{2}\left(\frac{2}{\sigma^2}\left(a^2+b^2\right)\right). 
\end{align}
Applying \eqref{eqn:sumofchi} in \eqref{eqn:complexr}, we get $X = \frac{2}{\sigma^2}r(\beta_1,\beta_2) \sim \chi^2_{2}\left(\lambda_1\right),$ where $\lambda_1 = \frac{2}{\sigma^2}\abs{\sqrt{P}\times H(\beta_1,\beta_2)}^2.$ The second part of the lemma follows from the additive property of non-central Chi-squared distribution of the sum of $n$ i.i.d. RVs of  $\chi^2_{2}\left(\lambda_1\right).$
\end{proof}

\subsection{Proof of Theorem 
\ref{thm:sub_exp_tail_bound}}
As $X_k, \forall k$ are independent, applying the definition \ref{defn:Subexponential}, the moment generating function of $\sum\limits_{k=1}^n(X_k - \mu_k)$ is given by
\begin{align}
\EE{e^{t\sum\limits_{k=1}^n(X_k - \mu_k)}} &\leq e^{\frac{\lambda^2}{2}\sum\limits_{k=1}^n \nu_k^2}, \quad \forall |t|<\left(\frac{1}{\max\limits_{k=1,2,\dots,n} b_k}\right).\nonumber%\label{eqn:mgf_sub_exp}.
\end{align}
Since the moment generating functions uniquely determines the distribution, comparing with the definition \ref{defn:Subexponential}, it follows that
$\sum\limits_{k=1}^n(X_k - \mu_k)$ is a sub-exponential $(\nu_{*},b_{*})$ random variable, where
\[b_{*} = \max\limits_{k=1,2\dots,n} b_k \quad \text{and}\quad  \nu_{*} = \sqrt{\sum\limits_{k=1}^n \nu_k^2}.\]
To prove the second part of the Theorem we use the following tail bound on a sub-exponential distribution proved in \cite{wainwright2019high}.

\begin{prop}[\cite{wainwright2019high} Proposition 2.9]
\label{prop:sub_exp_tail_bound}
Let $X$ is sub-exponential random variable with parameters $(\nu,b)$ and $\EE{X}=\mu$. Then
\begin{align*}
   \Prob{\abs{X - \mu} \geq t} \leq
   \begin{cases}
    &2e^{-\frac{t^2}{2\nu^2}}, \qquad \text{if } 0 \leq t \leq \frac{\nu^2}{b}\\
    &2e^{-\frac{t}{2b}}, \qquad \text{if } t \geq \frac{\nu^2}{b}.
   \end{cases}
\end{align*}
\end{prop}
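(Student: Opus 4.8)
The plan is to prove Proposition~\ref{prop:sub_exp_tail_bound}, the tail bound for a single sub-exponential random variable, directly from the moment-generating-function (MGF) definition via the Chernoff method and a careful optimization over the free parameter $t$ restricted to the valid interval $|t| < 1/b$. First I would fix $t \in [0, 1/b)$ and apply Markov's inequality to the exponentiated deviation: for any such $t$,
\begin{align}
\Prob{X - \mu \geq s} &= \Prob{e^{t(X-\mu)} \geq e^{ts}} \leq e^{-ts}\,\EE{e^{t(X-\mu)}} \leq e^{-ts + t^2\nu^2/2},\nonumber
\end{align}
where the last step uses Definition~\ref{defn:Subexponential}. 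This reduces everything to minimizing the exponent $g(t) = -ts + t^2\nu^2/2$ over the admissible range $t \in [0, 1/b)$.

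\textbf{The two-regime split.}
The key analytic observation is that the unconstrained minimizer of $g$ is $t^* = s/\nu^2$, but this is only admissible when $t^* < 1/b$, i.e. when $s < \nu^2/b$. This dichotomy is exactly what produces the two cases in the statement, so the main work is to handle each regime.
\begin{itemize}
\item[\ding{228}] \textbf{Small-deviation regime $0 \leq s \leq \nu^2/b$:} here $t^* = s/\nu^2$ lies in $[0,1/b)$, so I substitute it into $g$ to obtain $g(t^*) = -s^2/(2\nu^2)$, giving the bound $e^{-s^2/(2\nu^2)}$.
\item[\ding{228}] \textbf{Large-deviation regime $s \geq \nu^2/b$:} here $t^*$ falls outside the admissible interval, and since $g$ is decreasing on $[0,t^*]$, the best admissible choice is to push $t$ as close to $1/b$ as possible, effectively taking $t = 1/b$. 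Evaluating (or bounding) $g(1/b) = -s/b + \nu^2/(2b^2)$ and using $s \geq \nu^2/b$ to absorb the second term into the first yields the bound $e^{-s/(2b)}$.
\end{itemize}
In both regimes I obtain a one-sided tail bound $\Prob{X - \mu \geq s} \leq e^{-\min(s^2/(2\nu^2),\, s/(2b))}$.

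\textbf{Symmetrization and conclusion.}
To pass from the one-sided to the two-sided tail, I would note that the sub-exponential condition in Definition~\ref{defn:Subexponential} is stated for $|t| < 1/b$ and hence applies equally to $-X$ (with the same parameters $(\nu,b)$ and mean $-\mu$). Repeating the Chernoff argument for the left tail $\Prob{X - \mu \leq -s}$ gives an identical bound, and a union bound over the two one-sided events doubles the constant, producing the factor of $2$ in the statement. I expect the main obstacle to be the bookkeeping in the large-deviation regime: one must verify that the constrained optimum genuinely occurs at the boundary $t = 1/b$ and then correctly use the regime hypothesis $s \geq \nu^2/b$ to simplify $-s/b + \nu^2/(2b^2)$ to something no larger than $-s/(2b)$. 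Once both regimes are established and symmetrized, combining them yields precisely the piecewise bound claimed in Proposition~\ref{prop:sub_exp_tail_bound}.
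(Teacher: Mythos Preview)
Your Chernoff-method argument is correct and is exactly the standard proof (and the one Wainwright gives). Note, however, that the paper itself does not prove this proposition: it is simply quoted as Proposition~2.9 from \cite{wainwright2019high} and then invoked to finish the proof of Theorem~\ref{thm:sub_exp_tail_bound}. So there is no ``paper's own proof'' to compare against---you have supplied the omitted argument, and your two-regime optimization over $t\in[0,1/b)$ together with the union-bound symmetrization is the right way to do it. The only cosmetic point is that the admissible interval is open, so strictly speaking $t=1/b$ is a limiting choice; since the Chernoff upper bound $e^{-ts+t^2\nu^2/2}$ is continuous in $t$, the bound at $t=1/b$ follows immediately, which you implicitly acknowledge.
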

\noindent
The claim immediately follows by applying the above result on $Z_n:=\sum\limits_{k=1}^n(X_k - \mu_k),$ which is sub-exponential $(\nu_{*},b_{*}),$ where $b_{*} = \max\limits_{k=1,2\dots,n} b_k$  and $\nu_{*} = \sqrt{\sum\limits_{k=1}^n \nu_k^2}$. \hfill \IEEEQED

\subsection{Proof of Corollary \ref{cor:sub_exp_tail_bound}}
\label{subsec:ProofCor}
From Theorem~\ref{thm:sub_exp_tail_bound}, $\sum_{k=1}^n(X_k - \mu_k)$ is sub-exponential $(\nu_{*},b_{*}),$ where $b_{*} = 4$  and $\nu_{*} = 2\sqrt{n}(2+2a)$. Using the parameters $(\nu_*,b_*) = (2\sqrt{n}(2+2a),4)$ in Proposition \ref{prop:prob_ineq}, we get the required upper bound as
\begin{align}
%\label{eqn:ncp_tail_bound}
&\Prob{\abs{\frac{1}{n}\sum\limits_{k=1}^n(X_k - \mu_k)} \geq t} \leq 
   \begin{cases}
    2e^{-\frac{nt^2}{8(2+2a)^2}}, &0\leq t \leq (2+2a)^2\nonumber\\
    2e^{-\frac{nt}{8}}, &t \geq (2+2a)^2\nonumber\\
   % 1, &t<0\nonumber\\
    \end{cases}\\
 &\Prob{\abs{\frac{1}{n}\sum\limits_{k=1}^n(X_k - \mu_k)} \geq t} \leq     2e^{-\frac{n t^2}{8(2+2a)^2}}, \qquad t > 0. \nonumber
\end{align}

\hfill \IEEEQED

\subsection{Proof of Lemma \ref{lem:XSubExp}}
If $X\sim\chi_p^2(a)$, then according to \cite{el2013moment}, the moment-generating function (MGF) of $X$ is given by
\begin{align}
    \EE{\exp\{t(X-(p+a)\}} &= e^{-t(p+a)}\EE{e^{tX}}\nonumber\\
    &= \frac{e^{-t(p+a)}e^{\frac{at}{1-2t}}}{(1-2t)^{p/2}}\nonumber\\
    &= e^{\frac{2at^2}{1-2t}}\frac{e^{-pt}}{(1-2t)^{p/2}}, \qquad \text{ for } t < \frac{1}{2}.\label{eqn:MGF1}
\end{align}
% \textcolor{red}{MH: For what values of $t$ the above is true?}
% \textcolor{blue}{I have added the range.}
By following some calculus, refer  \cite{ghosh2021exponential}, \cite[Example 2.8]{wainwright2019high}, we obtain
\begin{align}
    \frac{e^{-pt}}{(1-2t)^{p/2}} \leq e^{2pt^2}, \qquad \text{ for } \abs{t} \leq \frac{1}{4}. \label{eqn:MGF2}
\end{align}
For $\abs{t} \leq \frac{1}{4}$, we have
\begin{align}
    e^{\frac{2at^2}{1-2t}} \leq e^{4at^2}. \label{eqn:MGF3}
\end{align}
Applying \eqref{eqn:MGF2} and \eqref{eqn:MGF3} to \eqref{eqn:MGF1}, we obtain
\begin{align}
     \EE{\exp\{t(X-(p+a)\}} \leq e^{2(p+2a)t^2}, \qquad\forall \abs{t} \leq \frac{1}{4}.\label{eqn:MGF4}
\end{align}
Therefore, by \eqref{eqn:MGF4}, $X$ is Sub-exponential distribution with parameters $\big(2(p+2a), 4\big).$ 
\hfill \IEEEQED

\bibliographystyle{IEEEtran}
\bibliography{ref}

% Generated by IEEEtran.bst, version: 1.14 (2015/08/26)
\begin{thebibliography}{10}
\providecommand{\url}[1]{#1}
\csname url@samestyle\endcsname
\providecommand{\newblock}{\relax}
\providecommand{\bibinfo}[2]{#2}
\providecommand{\BIBentrySTDinterwordspacing}{\spaceskip=0pt\relax}
\providecommand{\BIBentryALTinterwordstretchfactor}{4}
\providecommand{\BIBentryALTinterwordspacing}{\spaceskip=\fontdimen2\font plus
\BIBentryALTinterwordstretchfactor\fontdimen3\font minus
  \fontdimen4\font\relax}
\providecommand{\BIBforeignlanguage}[2]{{%
\expandafter\ifx\csname l@#1\endcsname\relax
\typeout{** WARNING: IEEEtran.bst: No hyphenation pattern has been}%
\typeout{** loaded for the language `#1'. Using the pattern for}%
\typeout{** the default language instead.}%
\else
\language=\csname l@#1\endcsname
\fi
#2}}
\providecommand{\BIBdecl}{\relax}
\BIBdecl

\bibitem{wan2021terahertz}
Z.~Wan, Z.~Gao, F.~Gao, M.~Di~Renzo, and M.-S. Alouini, ``Terahertz massive
  mimo with holographic reconfigurable intelligent surfaces,'' \emph{IEEE
  Transactions on Communications}, 2021.

\bibitem{jamali2020intelligent}
V.~Jamali, A.~M. Tulino, G.~Fischer, R.~R. M{\"u}ller, and R.~Schober,
  ``Intelligent surface-aided transmitter architectures for millimeter-wave
  ultra massive mimo systems,'' \emph{IEEE Open Journal of the Communications
  Society}, vol.~2, pp. 144--167, 2020.

\bibitem{larsson2014massive}
E.~G. Larsson, O.~Edfors, F.~Tufvesson, and T.~L. Marzetta, ``Massive mimo for
  next generation wireless systems,'' \emph{IEEE communications magazine},
  vol.~52, no.~2, pp. 186--195, 2014.

\bibitem{wu2019intelligent}
Q.~Wu and R.~Zhang, ``Intelligent reflecting surface enhanced wireless network
  via joint active and passive beamforming,'' \emph{IEEE Transactions on
  Wireless Communications}, vol.~18, no.~11, pp. 5394--5409, 2019.

\bibitem{huang2020holographic}
C.~Huang, S.~Hu, G.~C. Alexandropoulos, A.~Zappone, C.~Yuen, R.~Zhang,
  M.~Di~Renzo, and M.~Debbah, ``Holographic mimo surfaces for 6g wireless
  networks: Opportunities, challenges, and trends,'' \emph{IEEE Wireless
  Communications}, vol.~27, no.~5, pp. 118--125, 2020.

\bibitem{hu2018beyond}
S.~Hu, F.~Rusek, and O.~Edfors, ``Beyond massive mimo: The potential of data
  transmission with large intelligent surfaces,'' \emph{IEEE Transactions on
  Signal Processing}, vol.~66, no.~10, pp. 2746--2758, 2018.

\bibitem{yoo2021holographic}
I.~Yoo and D.~R. Smith, ``Holographic metasurface antennas for uplink massive
  mimo systems,'' \emph{arXiv preprint arXiv:2108.12513}, 2021.

\bibitem{zhang2022beam}
H.~Zhang, N.~Shlezinger, F.~Guidi, D.~Dardari, M.~F. Imani, and Y.~C. Eldar,
  ``Beam focusing for near-field multi-user mimo communications,'' \emph{IEEE
  Transactions on Wireless Communications}, 2022.

\bibitem{dai2006efficient}
F.~Dai and J.~Wu, ``Efficient broadcasting in ad hoc wireless networks using
  directional antennas,'' \emph{IEEE Transactions on Parallel and Distributed
  Systems}, vol.~17, no.~4, pp. 335--347, 2006.

\bibitem{xiao2016hierarchical}
Z.~Xiao, T.~He, P.~Xia, and X.-G. Xia, ``Hierarchical codebook design for
  beamforming training in millimeter-wave communication,'' \emph{IEEE
  Transactions on Wireless Communications}, vol.~15, no.~5, pp. 3380--3392,
  2016.

\bibitem{chen2018beam}
K.~Chen and C.~Qi, ``Beam training based on dynamic hierarchical codebook for
  millimeter wave massive mimo,'' \emph{IEEE Communications Letters}, vol.~23,
  no.~1, pp. 132--135, 2018.

\bibitem{WCL2022channelestimationHMMIMO}
{\"O}.~T. Demir, E.~Bj{\"o}rnson, and L.~Sanguinetti, ``Channel modeling and
  channel estimation for holographic massive mimo with planar arrays,''
  \emph{IEEE Wireless Communications Letters}, vol.~11, no.~5, pp. 997--1001,
  2022.

\bibitem{ghermezcheshmeh2021channel}
M.~Ghermezcheshmeh, V.~Jamali, H.~Gacanin, and N.~Zlatanov, ``Channel
  estimation for large intelligent surface-based transceiver using a parametric
  channel model,'' \emph{arXiv preprint arXiv:2112.02874}, 2021.

\bibitem{akdeniz2014millimeter}
M.~R. Akdeniz, Y.~Liu, M.~K. Samimi, S.~Sun, S.~Rangan, T.~S. Rappaport, and
  E.~Erkip, ``Millimeter wave channel modeling and cellular capacity
  evaluation,'' \emph{IEEE journal on selected areas in communications},
  vol.~32, no.~6, pp. 1164--1179, 2014.

\bibitem{ellingson2021path}
S.~W. Ellingson, ``Path loss in reconfigurable intelligent surface-enabled
  channels,'' in \emph{2021 IEEE 32nd Annual International Symposium on
  Personal, Indoor and Mobile Radio Communications (PIMRC)}.\hskip 1em plus
  0.5em minus 0.4em\relax IEEE, 2021, pp. 829--835.

\bibitem{selvan2017fraunhofer}
K.~T. Selvan and R.~Janaswamy, ``Fraunhofer and fresnel distances: Unified
  derivation for aperture antennas.'' \emph{IEEE Antennas and Propagation
  Magazine}, vol.~59, no.~4, pp. 12--15, 2017.

\bibitem{najafi2020physics}
M.~Najafi, V.~Jamali, R.~Schober, and H.~V. Poor, ``Physics-based modeling and
  scalable optimization of large intelligent reflecting surfaces,'' \emph{IEEE
  Transactions on Communications}, vol.~69, no.~4, pp. 2673--2691, 2020.

\bibitem{wainwright2019high}
M.~J. Wainwright, \emph{High-dimensional statistics: A non-asymptotic
  viewpoint}.\hskip 1em plus 0.5em minus 0.4em\relax Cambridge University
  Press, 2019, vol.~48.

\bibitem{TWC2021BTformmwaveIRSwang}
W.~Wang and W.~Zhang, ``Joint beam training and positioning for intelligent
  reflecting surfaces assisted millimeter wave communications,'' \emph{IEEE
  Transactions on Wireless Communications}, vol.~20, no.~10, pp. 6282--6297,
  2021.

\bibitem{el2013moment}
A.~E. El-Sayed, A.~I. Sahar, and Y.~A. Yassmen, ``Moment generating function of
  the unbalanced non-central chi-square distribution,'' \emph{International
  Journal of Engineering}, vol.~4, no.~3, p. 8269, 2013.

\bibitem{ghosh2021exponential}
M.~Ghosh, ``Exponential tail bounds for chisquared random variables,''
  \emph{Journal of Statistical Theory and Practice}, vol.~15, no.~2, pp. 1--6,
  2021.

\end{thebibliography}

\end{document}